\newcommand{\bra}[1]{\langle #1|}
\newcommand{\ket}[1]{|#1\rangle}
\newcommand{\braket}[2]{\langle #1|#2\rangle}
\newcommand{\cent}[0]{\mbox{\textcent}}
\newcommand{\dollar}[0]{\$}
\newtheorem{fact}{Fact}
\newcommand{\Hil}{\mathcal{H}}
\newcommand{\Hiln}{\mathcal{H}^n}
\newcommand{\Real}{\mathbb{R}}
\newcommand{\Realn}{\mathbb{R}^n}
\newcommand{\kettilde}[1]{\ket{\widetilde{#1}}}
\newcommand{\mymatrix}[2]{\left( \begin{array}{#1} #2\end{array} \right)}
\newcommand{\myvector}[1]{\mymatrix{c}{#1}}
\newcommand{\myrvector}[1]{\mymatrix{r}{#1}}
\newcommand{\cvector}[1]{\left( \begin{array}{c} #1 \end{array} \right)}
\newcommand{\tildesigma}{\widetilde{\Sigma}}
\newcommand{\tildew}{\tilde{w}}
\newcommand{\reg}{\mathsf{REG}}
\newcommand{\stoc}{\mathsf{SL}}
\newcommand{\affine}{\mathsf{AfL}}
\newcommand{\naffine}{\mathsf{NAfL}}
\newcommand{\baffine}{\mathsf{BAfL}}
\newcommand{\posonebaffine}{\mathsf{BAfL^0}}
\newcommand{\negonebaffine}{\mathsf{BAfL^1}}
\newcommand{\exactaffine}{\mathsf{EAfL}}
\newcommand{\excstoc}{\mathsf{SL^{\neq}}}
\newcommand{\coexcstoc}{\mathsf{SL^{=}}}
\newcommand{\excaffine}{\mathsf{AfL^{\neq}}}
\newcommand{\coexcaffine}{\mathsf{AfL^{=}}}
\newcommand{\nqal}{\mathsf{NQAL}}
\newcommand{\lapins}{\mathtt{LAPIN\check{S}}}
\newcommand{\TSigma}{\widetilde{\Sigma}}
\title{Affine computation and affine automaton}
\author{Alejandro D\'{i}az-Caro\inst{1}$^,$\thanks{D\'{i}az-Caro was partially supported by STIC-AmSud project 16STIC04 FoQCoSS.} \and Abuzer Yakary{\i}lmaz\inst{2}$^,$\thanks{Yakary{\i}lmaz was partially supported by CAPES with grant 88881.030338/2013-01 and some parts of the work were done while Yakary{\i}lmaz was visiting Buenos Aires in July 2015 to give a lecture at ECI2015 (Escuela de Ciencias Inform\'{a}ticas 2015, Departamento de Computaci\'{o}n, Facultad de Ciencias Exactas y Naturales, Universidad de Buenos Aires), partially supported by CELFI, Ministerio de Ciencia, Tecnolog\'{i}a e Innovaci\'{o}n Productiva.}}
\institute{
  Universidad Nacional de Quilmes\\
  Roque S\'aenz Pe\~na 352, B1876BXD Bernal, Buenos Aires, Argentina\\
  \email{alejandro.diaz-caro@unq.edu.ar}
  \and 
  National Laboratory for Scientific Computing\\
  Petr\'{o}polis, RJ, 25651-075, Brazil\\
  \email{abuzer@lncc.br}
}
\authorrunning{A. D\'{i}az-Caro \and A. Yakary{\i}lmaz} % abbreviated author list (for running head)
\begin{document}

\maketitle

\begin{abstract}
	We introduce a quantum-like classical computational concept, called affine computation, as a generalization of probabilistic computation. After giving the basics of  affine computation, we define affine finite automata (AfA) and compare it with quantum and probabilistic finite automata (QFA and PFA, respectively) with respect to three basic language recognition modes. We show that, in the cases of  bounded and unbounded error, AfAs are more powerful than QFAs and PFAs, and, in the case of nondeterministic computation, AfAs are more powerful than PFAs but equivalent to QFAs. Moreover, we show that exclusive affine languages form a superset of exclusive quantum and stochastic languages. 
\end{abstract}

\section{Introduction}

Using negative amplitudes, allowing interference between states and configurations, is one of the fundamental properties of quantum computation that does not exist in classical computation. Therefore, it is interesting to define a quantum-like classical system allowing to use negative values. However, both quantum and probabilistic systems are linear and it seems not possible to define a classical linear computational systems using negative values (see also the discussions regarding fantasy quantum mechanics in \cite{Aar05}). On the other hand, it is possible to define such a system \textit{almost linearly}, as we do in this paper.

A probabilistic state is a $l_1$-norm 1 vector defined on non-negative real numbers, also called a stochastic vector. A probabilistic operator is a linear operator mapping probabilistic states to probabilistic states, which is also called a stochastic matrix. Equivalently, a matrix is stochastic if each of its columns is a probabilistic state. Similarly, a quantum state is a $ l_2 $-norm 1 vector defined over complex numbers. A quantum operator is a linear operator mapping quantum states to quantum states, which is also called a unitary matrix. Equivalently, a matrix is unitary if each of its columns (also each row) is a quantum state. 

Our aim is to define a new system that (1) is a generalization of probabilistic system, (2) can have negative values, (3) evolves linearly, and (4) is defined in a simple way like the probabilistic and quantum systems. When working on non-negative real numbers, $ l_1 $-norm is the same as the summation of all entries. So, by replacing ``$ l_1 $-norm 1'' condition with ``summation to 1'' condition, we can obtain a new system that allows negative values in its states. The linear operators preserving the summation of vectors are barycentric-preserving, also called \textit{affine transformations}, which can give the name of our new system: \textit{affine system}. Thus, the state of an affine system is called an \textit{affine state}, the entries of which sum to 1. Moreover, a matrix is an affine transformation if each of its columns is an affine state. It is clear that an affine system is a probabilistic one if negative values are not used. Thus, the new affine system satisfies all four conditions above.

The only renaming detail is how to get information from the system. For this purpose, we define an operator similar to the measurement operator in quantum computation that projects the computation into the computational basis. It is intuitive that the ``weights'' of negative and positive values should be same if their magnitudes are the same. Moreover, each state should be observed with the probability calculated based on the value of its magnitude. Therefore, we normalize each magnitude (since the summation of all magnitudes can be bigger than 1) and each normalized magnitude gives us the probability of ``observing'' the corresponding state. We call this operator as \textit{weighting operator}.

In the paper, we give the basics of affine systems in detail and start to investigate affine computation by defining the affine finite automaton (AfA) (due to the simplicity of automata models). Then, we compare it with probabilistic finite automata (PFAs) and quantum finite automata (QFAs) with respect to the basic language recognition modes. We show that, in the cases of  bounded and unbounded error, AfAs are more powerful than QFAs and PFAs, and, in the case of nondeterministic computation, AfAs are more powerful than PFAs but equivalent to QFAs. Moreover, we show that exclusive affine languages form a superset of exclusive quantum and stochastic languages. Our results are also the evidence that although an AfA has a finite number of basis states, it can store more information. This is why we use small ``f'' in the abbreviation of AfA.

%Throughout the paper, we focus on the finite dimensional systems. We refer the reader to Appendix \ref{sec:pqsystem} for the details of probabilistic and quantum systems, which forms a base for the affine system. In Section \ref{sec:affine-system}, we provide the basics of affine systems. Then, we give the definitions of classical and quantum finite automata in Section \ref{classical-automata}. The definition of the affine finite automaton is given in Section \ref{sec:afa}. Our results are given in Section \ref{sec:main}, with the proof of one of the main theorems (Theorem \ref{thm:lapins}) given in Appendix \ref{app:lapins} due to space constraints. We close the paper with Section \ref{sec:conclusion}.

Throughout the paper, we focus on the finite dimensional systems. In Section \ref{sec:pqsystem}, we give the basics of probabilistic and quantum systems. In Section \ref{sec:affine-system}, we describe the basics of affine systems. Then, we give the definitions of classical and quantum finite automata in Section \ref{classical-automata}. The definition of the affine finite automaton is given in Section \ref{sec:afa}. Our results are given in Section \ref{sec:main}. 
%, with the proof of one of the main theorems (Theorem \ref{thm:lapins}) given in Appendix \ref{app:lapins} due to space constraints. 
We close the paper with Section \ref{sec:conclusion}.

\section{Probabilistic and quantum systems}
\label{sec:pqsystem}

A \textit{probabilistic system} has a finite number of states, say $ E = \{ e_1,\ldots,e_n \} $ $ (n>0) $, called \textit{deterministic states} of the system. At any moment, the system can be in a probabilistic distribution of these states:
\[
  v = \myvector{ p_1 \\ p_2 \\ \vdots \\ p_n },
\]
where $ p_j $ represents the probability of system being in state $ e_j $ ($ 1 \leq j \leq n $). Here $ v $ is called a \textit{probabilistic state}, which is a stochastic (column) vector, i.e.
\[
  0 \leq p_i \leq 1 \mbox{ and } \sum_{i=1}^n p_i = 1.
\]
It is clear that $ v $ is a vector in $ \Realn $ and all the deterministic states form the standard basis of $ \Realn $. Moreover, all the probabilistic states form a simplex in $ \Realn $, represented by linear equation $ x_1+x_2+\cdots+x_n =1 $ whose variables satisfy $ 0 \leq x_j \leq 1 $ ($ 1 \leq j \leq n$).

The system evolves from a probabilistic state to another one by a linear operator:
\[
  v' = Av,
\]
where $ A $ is an $ n \times n $ matrix and $ A[k,j] $ represents the probability of going from $ e_j $ to $ e_k $ ($ 1 \leq j,k \leq n $). Since $ v' $ is a probabilistic state and so a stochastic vector, $ A $ is a (left) stochastic matrix, each column of which is a stochastic vector. Assume that the system is in $ v_0 $ at the beginning, and $ A_t $ is the probabilistic operator at the $t$-th time step ($ t = 1,2,\ldots $). Then, the evolution of the system is as follows:
\[
  v_t = A_t A_{t-1} \cdots A_1 v_0.
\]
At the $t$-th step, the probability of observing the $j$-th state is $ v_t[j] $.

A quantum system is a \textit{non-trivial linear} generalization of a probabilistic one, which forms a Hilbert space (a complex vector space with inner product). A basis of the Hilbert space, say $\Hiln$, can be seen as the set of ``deterministic states'' of the system. Unless otherwise is specified, the standard basis is used: $ B = \{ \ket{q_1}, \ldots, \ket{q_n} \} $, where each $ \ket{q_j} $ is a zero vector except the $j$-th entry, which is 1. Remark that $ \Hiln = span\{ \ket{q_1},\ldots,\ket{q_n} \} $. At any moment, the system can be in a linear combination of basis states:
\[
  \ket{v} = \myvector{ \alpha_1 \\ \alpha_2 \\ \vdots \\ \alpha_n } \in \Hiln , ~~~ 
\]
where $ \alpha_j \in \mathbb{C} $ is called the \textit{amplitude} of the system being in state $ \ket{q_j} $. Moreover, the value $ | \alpha_j |^2 $ represents the \textit{probability} of the system being in state $ \ket{q_j} $. We call $ \ket{v} $ the \textit{(pure) quantum state} of the system, which is a norm-1 (column) vector:
\[
  \sqrt{\braket{v}{v}} = 1 \Leftrightarrow \braket{v}{v} = 1 \Leftrightarrow  \sum_{j=1}^n |\alpha_j|^2 = 1. 
\]
Remark that all the quantum states from a sphere in $ \mathbb{C}^n $, i.e. $ x_1^2 + x_2^2 + \cdots + x_n^2 = 1 $.

Similar to the probabilistic case, the system evolves from a quantum state to another one by a linear operator:
\[
  \ket{v'} = U \ket{v},
\]
where $ U $ is an $ n \times n $ matrix and $ U[k,j] $ represents the transition amplitude of going from $ \ket{q_j} $ to $ \ket{q_k} $ ($ 1 \leq j,k \leq n $). Since $ \ket{v'} $ is a quantum state and so a norm-1 vector, $ U $ is a unitary matrix, the columns/rows of which form an orthonormal set. Moreover, $ U^{-1} = U^{\dagger} $. 

To retrieve information from a quantum system, we apply measurement operators. In its simplest form, when in quantum state $ \ket{v} $, we can make a measurement in the computation basis and then we can observe $ \ket{q_j} $ with probability $ p_j= |\alpha_j|^2 $ and so the new state becomes $ \ket{q_j} $  (if $ p_i >0 $). We can also split the set $ B $ into $ m $ disjoint subsets: $ B = B_1 \cup \cdots \cup B_m $ and $ B_j \cap B_k = \emptyset $ for $ 1 \leq j \neq k \leq m $. Based on this classification, $\Hiln$ is split into $ m $ pairwise orthogonal subspaces: $ \Hiln = \Hil_1 \oplus \cdots \oplus \Hil_m $ where $ \Hil_j = span\{ \ket{q} \mid \ket{q} \in B_j \} $. We can design a projective measurement operator $ P $ to force the system to be observed in one of these subspaces, i.e.
\[
  P = \left\{ P_1,\ldots,P_m \mid P_j = \sum_{\ket{q} \in B_j} \ket{q}\bra{q} \mbox{ and } 1 \leq j \leq m \right\},
\]
where $ P_j $ is a zero-one projective matrix that projects any quantum state to $ \Hiln_j $. More formally,
\[
  \ket{v} = \kettilde{v_1} \oplus \cdots \oplus \kettilde{v_m},~~ \kettilde{v_j} = P_j \ket{v} \in \Hiln_j.
\]
Each $ P_j $ is called a \textit{projective operator} and the index is called a \textit{measurement outcome}. Then, the probability of observing the outcome ``$j$'' is calculated as
\[
  p_j = \braket{\widetilde{v_j}}{\widetilde{v_j}}.
\]
If it is observed ($ p_j>0 $), then the new state is obtained by normalizing $ \kettilde{v_j} $, which is called \textit{unnormalized (quantum) state},
\[
  \ket{v_j} = \frac{\kettilde{v_j}}{\sqrt{p_j}}.
\]

From a mathematical point of view, any quantum system defined on $ \Hiln $ can be simulated by a quantum system straightforwardly defined on $ \Real^{2n} $ (e.g. \cite{MC00}). Therefore, we can say that the main distinguishing property of quantum systems is using negative amplitudes rather than using complex numbers.

After making projective measurements, for example, the quantum system can be in a mixture of pure quantum states, i.e.
\[
  \left\{ (p_j,\ket{v_j}) \mid 1 \leq j \leq m, \sum_{j=1}^m p_j = 1 \right\}.
\]
We can represent such a mixture as a single mathematical object called \textit{density matrix}, an $(n \times n)$-dimensional matrix:
\[
  \rho =  \sum_{j=1}^m p_j \ket{v_j}\bra{v_j},
\]
which is called the \textit{mixed state} of the system. A nice property of $ \rho $ is that the $ k $-th diagonal entry represents the probability of the system of being in the state $ \ket{q_k} $, i.e. $ Tr(\rho) = 1 $.

It is clear that unitary operators are not the generalizations of stochastic operators. However, by interacting a quantum system with an auxiliary system, more general quantum operators can be applied on the main quantum system. They are called \textit{superoperators}.\footnote{A superoperator can also be obtained by applying a series of unitary and measurements operators where the next unitary operator is selected with respect to the last measurement outcome.} Formally, a superoperator $ \cal E $ is composed by a finite number of operation elements $ \{ E_j \mid 1 \leq j \leq m \} $, where $ m>0 $, satisfying that
\[
  \sum_{j=1}^m E_j^\dagger E_j = I.
\]
When $ \cal E $ is applied to the mixed state $ \rho $, the new mixed state is obtained as
\[
  \rho' = \mathcal{E}(\rho) = \sum_{j=1}^m E_j \rho E_j^\dagger.
\]
In fact, a superoperator always includes a measurement and the indices of operation elements can be seen as the outcomes of the measurement(s). When $ \cal E $ is applied to pure state $ \ket{v} $, we can obtain up to $ m $ new pure states. The probability of observing the outcome of ``$j$'', say $p_j$, calculated as
\[
	p_j = \braket{\widetilde{v_j}}{\widetilde{v_j}},~~~\ket{\widetilde{v_j}} = E_j \ket{v},
\]
where $ \ket{\widetilde{v_j}}  $ is called an unnormalized state vector if it is not a zero vector. If the outcome ``$j$'' is observed ($p_j>0$), then the new state becomes,
\[
	\ket{v_j} = \frac{ \ket{\widetilde{v_j}}}{\sqrt{p_j}}.
\]
Remark that using unnormalized state vectors sometimes make the calculations easier since the probabilities can be calculated directly from them.

If we apply the projective measurement $ P = \{ P_j \mid 1  \leq j \leq m \} $ to the mixed state $ \rho $, where $ m>0 $, the probability of observing the outcome $ j $, say $p_j$, and the new state, say $ \rho_j $, is calculated as follows:
\[
  \widetilde{\rho_j} = P_j \rho P_j, ~~~ p_j = Tr(\widetilde{\rho_j}), ~~ \mbox{and} ~~ \rho_j = \frac{\widetilde{\rho_j}}{p_j} (\mbox{if } p_j>0).
\]

The reader may ask how a quantum system can be a linear generalization of a probabilistic system. We omit the details here but any probabilistic operator can be implemented by a superoperator. Moreover, a mixed-state can be represented as a single column vector, and each superoperator can be represented as a single matrix. Then, all computations can be represented linearly. We refer the reader to \cite{Wat09A,YS11A,SayY14} for the details.

\section{Affine systems}
\label{sec:affine-system}

Inspired from quantum systems, we define the finite-dimensional affine system (AfS) as a \textit{non-linear} generalization of a probabilistic system by allowing to use negative ``probabilities''. Let $ E = \{ e_1,\ldots,e_n \} $ be the set of basis states, which are the deterministic states of an $ n $-dimensional probabilistic system. Any affine state is a linear combination of $ E $
\[
  v = \cvector{a_1 \\ a_2 \\ \vdots \\ a_n}
\]
such that each entry can be an arbitrary real number but the summation of all entries must be 1:
\[
  \sum_{i=1}^n a_i = 1.
\]
So, any probabilistic state, a stochastic column vector, is an affine state. However, on contrary to a probabilistic state, an affine state can contain negative values. Moreover, all the affine states form a surface in $\Realn $, i.e. $ x_1 + x_2 + \cdots + x_n =  1 $.

Both, probabilistic and quantum states, form finite objects (simplex and sphere, respectively). For example, in $\Real^2$, all the probabilistic states form the line $ x+y=1 $ on $ (\Real^+ \cup \{0\})^2 $ with length $ \sqrt{2} $ and all the quantum states form the unit circle with length $ 2 \pi $. On the other hand, affine states form infinite objects (plane). In $ \Real^2 $, all the affine states form the infinite line $ x+y=1 $. Therefore, it seems that, with the same dimension, affine systems can store more information. In this paper, we provide some evidences to this interpretation. On the other hand, affine systems might not be comparable with quantum systems due to the fact of forming different geometrical objects (e.g. line versus circle).  

%from computational point of view, it is interesting to ask 
%\begin{itemize}
%	\item whether affine systems can do more than probabilistic and quantum systems and
%	\item whether affine systems can represent quantum systems.
%\end{itemize}

Any affine transformation is a linear operator, that is, a mapping between affine states. We can easily show that any matrix is an affine operator if and only if for each column, the summation of all entries is equal to 1. The evolution of the system is as follows: when in affine state $v$, the new affine state $ v' $ is obtained by
\[
  v' = A v,
\]
where $ A $ is the affine transformation such that $ A[j,k] $ represents the transition value from $ e_k $ to $ e_j $.

In quantum computation, the sign of the amplitudes does not matter when making a measurement. We follow the same idea for affine systems. More precisely, the \textit{magnitude} of an affine state is the $l_1$-norm of the state:
\[
  |v| = | a_1 | + |a_2| + \cdots + |a_n| \geq 1.
\]
Then, we can say that the probability (\textit{weight}) of observing the $ j $-th state is
\[
  \frac{|a_j|}{|v|},
\]
where $ 1 \leq j \leq n $.
To retrieve this information, we use an operator (possible non-linear) called \textit{weighting operator}, which can be seen as a counterpart of the measurements in the computational basis for quantum systems. Therefore, we can make a weighting in the basis $ E $ and the system collapses into a single deterministic state.

One may ask whether we can use a weighting operator similar to a projective measurement. Assume that the system is in the following affine state
\[
  v = \mymatrix{r}{1\\-1\\1}
\]
and we make weighting based on the separation $ \{ e_1 \} $ and $ \{ e_2,e_3 \} $. Then, we can observe the system in the first state with weight $ \frac{1}{3} $ and in the second and third states with weight $ \frac{2}{3} $. But, in the latter case, the new state is not an affine state since the summation of entries will always be zero whatever normalization factor is used. Therefore, once we make a weighting, the system must collapse to a single state. On the other hand, one may still define an \textit{affine system with extended weighting} by allowing this kind of weighting with the assumption that if the new state  has a zero summation, then the system terminates, i.e. no further evolution can occur. Such kind of assumptions may be used cleverly to gain some computational power.

%Remark that even though affine operators are linear, the weighting operator can be non-linear and so affine system may be seen as a non-linear generalization of probabilistic systems. 

One may also define an affine state as a $ l_1 $-norm 1 vector on the real numbers and require that each new state is normalized after each linear affine operator. A straightforward calculation shows that the weighting results will be exactly the same as the previous definition, so both systems are equivalent. However, this time the overall evolution operator, a linear affine operator followed by normalization, is not linear.
% Note that the stochastic operators are the only linear operators preserving the $l_1$ norm. 
With respect to this new definition, say \textit{normalized affine systems}, all the affine states form finite objects: $ |x_1|+|x_2|+\cdots+|x_n| = 1 $. It is, for example, a square on $ \Real^2 $: $ |x|+|y| =1 $. One could see this square as an approximation of the unit circle but remark that we cannot use unitary operators as affine operators directly. On the other hand, we may define a more general model by replacing linear affine operators with arbitrary linear operators. We call this system \textit{general affine systems} or \textit{general normalized affine systems}. In this paper, we focus only on the standard definition where the states are vectors with a barycentric sum to $1$, and the transformations are affine operators preserving such barycenters.

\section{Classical and quantum automata}
\label{classical-automata}

Unless otherwise specified, we denote the input alphabet as $ \Sigma $, not containing the left end-marker $\cent$ and the right end-marker $ \dollar $. The set of all the strings generated on $ \Sigma $ is denoted by $ \Sigma^* $. We define $ \tildesigma = \Sigma \cup \{ \cent,\dollar \} $ and $ \tildew = \cent w \dollar $ for any string $w \in \Sigma^* $. For any given string $ w \in \Sigma^* $, $|w|$ is the length of the string, $|w|_\sigma$ is the number of occurrences of the symbol $\sigma$ in $w$, and $w_j$ is the $j$-th symbol of $w$. 

For a given machine/automaton $ M $, $ f_{M}(w) $ denotes the accepting probability (value) of $M$ on the string $ w $.

A probabilistic finite automaton (PFA) \cite{Rab63} $ P $ is 5-tuple
\[
  P = (E,\Sigma,\{ A_{\sigma} \mid \sigma \in \tildesigma \},e_s,E_a),
\]
where $ E $ is the set of deterministic states, $e_s \in E$ is the starting state, $ E_a \subseteq E $ is the set of accepting state(s), and $ A_{\sigma} $ is the stochastic transition matrix for the symbol $ \sigma \in \widetilde{\Sigma} $. Let $ w \in \Sigma^* $ be the given input. The input is read as $ \tildew $ from left to right, symbol by symbol. After reading the $ j $-th symbol, the probabilistic state is
\[
  v_j = A_{\tildew_j} v_{j-1} = A_{\tildew_j} A_{\tildew_{j-1}} \cdots A_{\tildew_1} v_0,   
\] 
where $ v_0 = e_s $ and $ 1 \leq j \leq |\tildew| $. The final state is denoted $ v_f = v_{|\tildew|} $. The accepting probability of $ P $ on $w$ is calculated as
\[
  f_P(w) = \sum_{e_k \in E_a} v_f[k].
\]

A quantum finite automaton (QFA) \cite{AY15} $ M $ is a 5-tuple
\[
	M = (Q,\Sigma,\{ \mathcal{E}_\sigma \mid \sigma \in \TSigma \},q_s,Q_a),
\]
where $ Q $ is the set of basis states, $ \mathcal{E}_\sigma $ is the transition superoperator  for symbol $ \sigma $, $ q_s $ is the starting state, and $ Q_a \subseteq Q $ is the set of accepting states. For a given input $ w \in \Sigma^* $, the computation of $ M $ on $ w $ is traced as
\[
	\rho_j = \mathcal{E}_{\tilde{w}_j} ( \rho_{j-1} ),
\]
where $ \rho_0 = \ket{q_s}\bra{q_s} $ and $ 1 \leq j \leq |\tilde{w}| $. The final state is denoted $ \rho_f = \rho_{|\tilde{w}|} $. The accepting probability of $ M $ on $ w $ is calculated as
\[
	f_M(w) =  \sum_{q_j \in A_a} \rho_f[j,j].
\]

If we restrict the entries of the transitions matrices of a PFA to zeros and ones, we obtain a deterministic finite automaton (DFA). A DFA is always in a single state during the computation and the input is accepted if and only if the computation ends in an accepting state. A language is said to be recognized by a DFA (then called regular \cite{RS59}) if and only if any member of the language is accepted by the DFA. The class of regular languages are denoted by $\reg$. 

Let $ \lambda \in \lbrack 0,1) $ be a real number. A language $ L $ is said to be recognized by a PFA $P$ with cutpoint $ \lambda $ if and only if
\[
  L = \{ w \in \Sigma^* \mid f_P(w) > \lambda \}.
\]
Any language recognized by a PFA with a cutpoint is called stochastic language \cite{Rab63} and the class of stochastic languages are denoted by $ \stoc $, a superset of $\reg$. A language is said to be recognized by a PFA $  P$ with unbounded error if  $ L $ or the complement of $ L $ is recognized by  $P$ with cutpoint \cite{YS11A}. (Remark that it is still not known whether $ \stoc $ is closed under complement operation.) 

As a special case,  if $ \lambda = 0 $, the PFA is also called a nondeterministic finite automaton (NFA). Any language recognized by a NFA is also regular. 

A language $ L $ is said to be recognized by $ P $ with isolated cutpoint $ \lambda $ if and only if there exists a positive real number $ \delta $ such that 
\begin{itemize}
  \item $ f_P(w) \geq \lambda + \delta $ for any $ w \in L $ and
  \item $ f_P(w) \leq \lambda - \delta $ for any $ w \notin L $.
\end{itemize}
When the cutpoint is required to be isolated, PFAs are not more powerful than DFAs: Any language recognized by a PFA with isolated cutpoint is regular \cite{Rab63}. 

Recognition with isolated cutpoint can also be formulated as recognition with bounded error. Let $ \epsilon \in \lbrack 0,\frac{1}{2}) $. A language $ L $ is said to be recognized by a PFA $ P $ with error bound $ \epsilon $ if and only if
\begin{itemize}
  \item $ f_P(w) \geq 1 - \epsilon $ for any $ w \in L $ and
  \item $ f_P(w) \leq \epsilon $ for any $ w \notin L $.
\end{itemize}
As a further restriction of bounded error, if $ f_P(w) = 1 $ for any $ w \in L $, then it is called negative one-sided error bound, and, if $ f_P(w) = 0 $ for any $ w \notin L $, then it is called positive one-sided error bound. If the error bound is not specified, it is said that $ L $ is recognized by $ P $ with \lbrack negative/positive one-sided\rbrack\ bounded error. 

A language $ L $ is called exclusive stochastic language \cite{Paz71} if and only if there exists a PFA $P$  and a cutpoint $ \lambda \in [0,1]  $ such that 
\[
  L = \{ w \in \Sigma^* \mid f_P(w) \neq \lambda \}.
\]
The class of exclusive stochastic languages is denoted by $ \excstoc $. Its complement class is denoted by $ \coexcstoc $ ($ L \in \excstoc \leftrightarrow \overline{L}\in \coexcstoc $). Note that for any language in $ \excstoc $ we can pick any cutpoint between 0 and 1 but not 0 or 1 since when fixing the cutpoint to 0 or 1, we can recognize only regular languages. Note that both $\excstoc$  and $ \coexcstoc$ are supersets of $ \reg $ (but it is still open whether $ \reg $ is a proper subset of $ \excstoc \cap \coexcstoc $). 
%Moreover, if the PFA is defined only with rational numbers, then the corresponding classes of  $ \excstoc $ and $ \coexcstoc $ are represented  as $ \excstocrational $ and $ \coexcstocrational $, respectively.

In the case of QFAs, they recognize all and only regular languages with bounded-error \cite{LQZLWM12} and stochastic languages with cutpoint \cite{YS09C,YS11A}. However, their nondeterministic versions (NQFAs) are more powerful: $\nqal$, the class of languages defined by NQFAs (QFAs with cutpoint 0), is identical to $ \excstoc $ \cite{YS10A}. 

\section{Affine finite automaton}
\label{sec:afa}

Now we define the affine finite automaton (AfA). An AfA $ M $ is a 5-tuple
\[
  M = (E,\Sigma,\{ A_{\sigma} \mid \sigma \in \tildesigma\},e_s,E_a),
\]
where all the components are the same as that of PFA except that $ A_{\sigma} $ is an affine transformation matrix. Let $ w \in \Sigma^* $ be the given input. After reading the whole input, a weighting operator is applied and the weights of the accepting states determine the accepting probability of $M$ on $w$, i.e.
\[
  f_M(w) = \sum_{e_k \in E_a} \frac{| v_f[k] |}{|v_f|} \in [0,1].
\]

The languages recognized by AfAs are defined similarly to PFAs and QFAs. Any language recognized by an AfA with cutpoint is called \textit{affine language}. The class of affine languages is denoted $ \affine $. Any language recognized by an AfA with cutpoint 0 (called nondeterministic AfA (NAfA)) is called \textit{nondeterministic affine language}. The related class is denoted $ \naffine $. A language is called exclusive affine language if and only if there exists an AfA $M$  and a cutpoint $ \lambda \in [0,1]  $ such that 
\[
  L = \{ w \in \Sigma^* \mid f_M(w) \neq \lambda \}.
\]
The class of exclusive affine languages is denoted by $ \excaffine $ and its complement class is denoted by $ \coexcaffine $. Any language recognized by an AfA with bounded error is called \textit{bounded affine language}. The related class is denoted $ \baffine $. If it is a positive one-sided error (all non-members are accepted with value 0), then the related class is denoted $ \posonebaffine$, and, if it is a negative one (all members are accepted with value 1), then the related class is denoted $ \negonebaffine $. Note that if $ L \in \posonebaffine $, then $ \overline{L} \in \negonebaffine $, and vice versa. Any language recognized by an AfA with zero-error is called \textit{exact affine language} and the related class is denoted $ \exactaffine $.

\section{Main results}
\label{sec:main}

We present our results under five subsections. 

\subsection{Bounded-error languages}

We start with a 2-state AfA, say $M_1$, for the language $ \mathtt{EQ} = \{ w \in \{a,b\}^* \mid |w|_a = |w|_b \} $. Let $ E = \{e_1,e_2\} $ be the set of states, where $ e_1 $ is the initial and only accepting state. None of the end-markers is used (or the related operators are the identity). At the beginning, the initial affine state is
\[
  v_0 = \myvector{1 \\ 0}.
\] 
When reading symbols $ a $ and $ b $, the following operators are applied:
\[
  A_a = \mymatrix{rr}{2 & 0 \\ -1 & 1} ~~~ A_b = \mymatrix{rr}{\frac{1}{2} & 0 \\ \frac{1}{2} & 1},
\]
respectively. 
\begin{center}
  \begin{tikzpicture}[initial text=,>=stealth',auto,node distance=3cm]
    \node[state,initial,accepting] (e1) {$e_1$};
    \node[state] (e2) [right of=e1] {$e_2$};
    \path[->] 
    (e1) edge [loop above] node[above]{$(a,2)$} (e1)
    (e1) edge [loop below] node[below]{$(b,\frac12)$} (e1)
    (e1) edge [bend left] node[above] {$(a,-1)$} (e2)
    (e1) edge [bend right] node[below] {$(b,\frac12)$} (e2)
    (e2) edge [loop above] node[above] {$(a,1)$} (e2)
    (e2) edge [loop below] node[below] {$(b,1)$} (e2);
  \end{tikzpicture}
\end{center}

Then, the value of the first entry of the affine state is multiplied by 2 for each $ a $ and by $ \frac{1}{2} $ for each $ b $, and so, the second entry takes the value of ``1 minus the value of the first entry'', i.e. if $ M $ reads $ m $ $a$s and $ n $ $b$s, then the new affine state is
\[
  \myvector{ 2^{m-n} \\ 1 - 2^{m-n} }.
\]
That is, for any member, the final affine state is 
$
v_f = \myvector{1 \\ 0}
$
and so the input is accepted with value 1. For any non-member, the final state can be one of the followings
\[
  \cdots, \myrvector{8 \\ -7}, \myrvector{4 \\ -3}, \myrvector{2 \\ -1}, \myvector{\frac{1}{2} \\ \frac{1}{2}}, \myvector{\frac{1}{4} \\ \frac{3}{4}}, \myvector{\frac{1}{8} \\ \frac{7}{8}}, \cdots.
\]
Thus, the maximum accepting value is obtained when $ v_f =  \myrvector{2 \\ -1} $, which gives the accepting value $ \frac{|2|}{|2|+|-1|} = \frac{2}{3} $. Therefore, we can say that the language $ \tt EQ $ can be recognized by the AfA $M_1$ with isolated cutpoint $ \frac{5}{6} $ (the isolation gap is $\frac{1}{6}$). Since it is a nonregular language, we can follow that AfAs can recognize more languages than PFAs and QFAs with isolated cutpoints (bounded error).

By using 3 states, we can also design an AfA $ M_2(x) $ recognizing $\tt EQ$ with better error bounds, where $ x \geq 1 $:
\[
  M_2(x) = \{ \{e_1,e_2,e_3\},\{a,b\}, \{ A_a, A_b \}, e_1,\{e_1\} \}, \mbox{ where}
\] 
\[
  A_a = \mymatrix{rrr}{1 & 0 & 0 \\ x & 1 & 0 \\ -x & 0 & 1}
  ~~\mbox{and}~~
  A_b = \mymatrix{rcc}{1 & 0 & 0 \\ -x & 1 & 0 \\ x & 0 & 1}.
\]
\begin{center}
  \begin{tikzpicture}[initial text=,>=stealth',auto,node distance=3cm]
    \node[state,initial,accepting] (e1) {$e_1$};
    \node[state] (e2) [below left of=e1] {$e_2$};
    \node[state] (e3) [below right of=e1] {$e_3$};
    \path[->] 
    (e1) edge [loop above] node[above]{$(a,1)$} (e1)
    (e1) edge [loop below] node[below]{$(b,1)$} (e1)
    (e1) edge [bend right] node[sloped,above]{$(a,x)$} node[sloped,below]{$(b,-x)$}(e2)
    (e1) edge [bend left] node[sloped,above]{$(a,-x)$} node[sloped,below]{$(b,x)$}(e3)
    (e2) edge [loop right] node[right]{$(a,1)$} (e2)
    (e2) edge [loop left] node[left]{$(b,1)$} (e2)
    (e3) edge [loop right] node[right]{$(a,1)$} (e3)
    (e3) edge [loop left] node[left]{$(b,1)$} (e3)
    ;
  \end{tikzpicture}
\end{center}

The initial affine state is $ v_0 = (1,0,0) $ and after reading $ m $ $a$s and $n$ $b$s, the affine state will be
\[
  \mymatrix{c}{1\\(m-n)x \\ (n-m)x}.
\]
Then, the accepting value will be $1$ if $m=n$, and,
$
\frac{1}{2x|m-n|+1}
$
if $m\neq n$. Notice that it is at most $ \frac{1}{2x+1} $ if $ m \neq n $. Thus, by picking larger $ x $, we can get smaller error bound. 

\begin{theorem}
  $ \reg \subsetneq \negonebaffine $ and $ \reg \subsetneq \posonebaffine \subseteq \naffine $.
\end{theorem}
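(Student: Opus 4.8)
The plan is to establish the two containments $\reg \subseteq \negonebaffine$ and $\reg \subseteq \posonebaffine$ first, then prove strictness by exhibiting a nonregular witness in each class, and finally verify the inclusion $\posonebaffine \subseteq \naffine$ directly from the definitions. For the containments, I would observe that every DFA is a special AfA: a DFA is a PFA whose transition matrices have entries in $\{0,1\}$, such matrices are affine (each column sums to $1$), and they map every deterministic basis vector to another deterministic basis vector. Hence throughout the computation the affine state remains a standard basis vector, so its magnitude is always $1$ and the final weighting assigns value $1$ to the single state reached. Consequently, for the DFA-AfA recognizing a regular language $L$, every $w \in L$ gets $f_M(w) = 1$ and every $w \notin L$ gets $f_M(w) = 0$. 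This is simultaneously a negative one-sided error (members get exactly $1$, non-members get $0 \le \epsilon$) and a positive one-sided error (non-members get exactly $0$, members get $1 \ge 1 - \epsilon$); in fact it shows $\reg \subseteq \exactaffine \subseteq \negonebaffine \cap \posonebaffine$.

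For strictness of $\reg \subsetneq \negonebaffine$, I would reuse the automaton $M_2(x)$ constructed above for the nonregular language $\mathtt{EQ}$. Taking $x = 1$, every member is accepted with value $1$ and every non-member with value at most $\frac{1}{3}$, so $\mathtt{EQ}$ is recognized with negative one-sided bounded error (bound $\epsilon = \frac{1}{3} < \frac{1}{2}$). Since $\mathtt{EQ} \notin \reg$, the containment is proper. For $\reg \subsetneq \posonebaffine$, rather than building a new machine I would invoke the complementation remark stated just after the definition of these classes (namely $L \in \posonebaffine \Leftrightarrow \overline{L} \in \negonebaffine$): from $\mathtt{EQ} \in \negonebaffine$ it follows that $\overline{\mathtt{EQ}} \in \posonebaffine$, and $\overline{\mathtt{EQ}}$ is nonregular because $\reg$ is closed under complement. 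Alternatively I could swap the accepting and rejecting roles in $M_2(x)$, but the complementation shortcut is cleaner.

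Finally, for $\posonebaffine \subseteq \naffine$, suppose $L$ is recognized by an AfA $M$ with positive one-sided error bound $\epsilon < \frac{1}{2}$. Then $f_M(w) = 0$ for every $w \notin L$ while $f_M(w) \ge 1 - \epsilon > 0$ for every $w \in L$, so $L = \{ w \mid f_M(w) > 0 \}$; this is exactly recognition with cutpoint $0$, i.e. $L \in \naffine$.

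I expect no genuine obstacle: both the DFA simulation and the last inclusion are essentially definitional, and the only real content is the nonregular witness, which has already been supplied by $M_2(x)$. The single point that warrants care is checking that $M_2(1)$ meets the $\epsilon < \frac{1}{2}$ threshold for true bounded error, as opposed to the weaker isolated-cutpoint guarantee recorded for $M_1$; this is secured by the explicit non-member bound $\frac{1}{2x+1}$, which equals $\frac{1}{3}$ at $x=1$.
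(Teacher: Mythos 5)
Your proposal is correct and matches the paper's (largely implicit) argument: the paper proves this theorem via the $M_2(x)$ construction for $\mathtt{EQ}$ (members accepted with value exactly $1$, non-members with value at most $\frac{1}{2x+1}$), leaving the DFA-as-AfA containment, the complementation step for $\posonebaffine$, and the definitional inclusion $\posonebaffine \subseteq \naffine$ unstated. You have simply made those routine steps explicit, and your care about the $\epsilon < \frac{1}{2}$ threshold (using $M_2(x)$ rather than $M_1$) is exactly the right check.
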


The knowledable readers can notice that in the algorithm $ M_2(x) $, we actually implement a blind counter \cite{Gre78}\footnote{A counter is blind if its status (whether its value is zero or not) cannot be accessible during the computation. A multi-blind-counter finite automaton is an automaton having $k>0$ blind counter(s) such that in each transition it can update the value(s) of its counter(s) but never access the status of any counter. Moreover, an input can be accepted by such automaton only if the value of every counter is zero at the end of the computation.}. Therefore, by using more states, we can implement more than one blind counter.

\begin{corollary}
  Any language recognized by a deterministic multi-blind-counter automaton is in $ \negonebaffine $.
\end{corollary}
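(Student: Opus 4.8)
The plan is to generalize the single-blind-counter construction $M_2(x)$ to an arbitrary number of counters together with a nontrivial finite control. Let $D$ be a deterministic $k$-blind-counter automaton with finite-control state set $Q = \{q_1,\ldots,q_r\}$, transition function $\delta$, integer counter-update functions $\Delta_i(q,\sigma)$ (the change applied to counter $i$ when reading $\sigma$ in control state $q$), start state $q_s$, and accepting set $Q_a$. I would build an AfA $M$ whose basis states are the $r$ control states together with, for each counter $i$, a pair of auxiliary states $e_i^+,e_i^-$, so that $M$ has $r+2k$ states. The finite control is encoded one-hot (the current control state carries value $1$), and the value $c_i$ of counter $i$ is stored as the pair $(e_i^+,e_i^-)=(c_i x,-c_i x)$, exactly as $(e_2,e_3)$ store $\pm(m-n)x$ in $M_2(x)$.

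Next I would define the transition operator $A_\sigma$ so that its column indexed by control state $q_j$ sends the one-hot mass to $q_{\delta(q_j,\sigma)}$ while simultaneously depositing $\Delta_i(q_j,\sigma)\,x$ into row $e_i^+$ and $-\Delta_i(q_j,\sigma)\,x$ into row $e_i^-$ for every $i$; the columns indexed by $e_i^{\pm}$ are the identity, so accumulated counter values persist untouched. Because the plus/minus deposits cancel, each such column sums to $1$, hence every $A_\sigma$ is a genuine affine operator. The key invariant, proved by induction on the length of the prefix read, is that after reading any prefix the affine state equals the one-hot vector at the current control state plus $\sum_i c_i x\,(e_i^+ - e_i^-)$, where $c_i$ is the current value of counter $i$. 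The one-hot ``reference value $1$'' sitting on the active control state is precisely what guarantees that each step increments $e_i^+$ by exactly $\Delta_i(q_j,\sigma)\,x$ and nothing more; end-marker transitions, if present, are folded into the control part with zero counter updates and do not affect the argument.

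Finally I would take $Q_a$ as the accepting states of $M$ and compute the final weight. The $l_1$-norm of the final state is $1 + 2x\sum_i |c_i|$, while the accepting mass is $1$ if the control halts in $Q_a$ and $0$ otherwise; hence $f_M(w)=1$ exactly when $D$ accepts (accepting control state and all counters zero), $f_M(w)\le \frac{1}{1+2x}$ whenever the control accepts but some counter is nonzero, and $f_M(w)=0$ when the control rejects. Choosing any $x\ge 1$ then yields negative one-sided bounded error with error bound at most $\frac{1}{3}$, placing the recognized language in $\negonebaffine$. The one genuinely delicate point is the dual role of each transition matrix: the counter deposits must be read off the \emph{source} control-state column (so that they are driven by the current control state rather than merely by the symbol) while the same matrix still realizes the deterministic control transition and leaves the counter registers fixed. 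Verifying that these two requirements are simultaneously affine and jointly preserve the one-hot invariant is where the bookkeeping must be done with care.
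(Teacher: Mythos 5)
Your construction is correct and is exactly the generalization the paper intends: the paper's own justification is the remark that $M_2(x)$ implements a blind counter via the $(e_i^+,e_i^-)$ pair trick, and that more states (a one-hot finite control plus one such pair per counter) yield any deterministic multi-blind-counter automaton. Your verification of the affine column sums, the one-hot invariant, and the final weight $1$ versus at most $\frac{1}{1+2x}$ fills in the details the paper leaves implicit, with no substantive deviation.
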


Since AfA is a generalization of PFA, we can also obtain the following result.

\begin{theorem}
  Any language recognized by a probabilistic multi-blind-counter automaton with bounded-error is in $ \baffine $.
\end{theorem}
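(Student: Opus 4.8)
The plan is to build an AfA $M'$ that runs the probabilistic control of the given machine $M$ in a ``stochastic'' sector and checks the $k$ blind counters with the signed, magnitude-based mechanism already used in $M_2(x)$. Since AfA is a generalization of PFA, the control part is essentially free: I would keep one coordinate per control state carrying exactly the probability mass of that state, updated by the stochastic transition matrices of $M$. The state space of $M'$ is then a product of the control states of $M$ with a small counter-checking gadget of $1+2k$ coordinates: one ``main'' coordinate together with a pair $e_2^\ell,e_3^\ell$ for each counter $\ell$, mirroring the three states $e_1,e_2,e_3$ of $M_2(x)$.

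On reading $\sigma$, the combined operator acts as the stochastic control transition on the control coordinates and, simultaneously, feeds $+x d_\ell$ into $e_2^\ell$ and $-x d_\ell$ into $e_3^\ell$ whenever counter $\ell$ is updated by $d_\ell$, while each $e_2^\ell,e_3^\ell$ self-loops with $1$. The key structural fact I would verify first is that this combined operator is affine: every column sums to $1$, because the control block is column-stochastic and the counter block contributes $+xd_\ell$ and $-xd_\ell$, which cancel, exactly as the first column $(1,x,-x)$ of $A_a$ in $M_2(x)$ sums to $1$. More abstractly, merging a stochastic matrix with an affine one is affine, since the column sums multiply to $1\cdot 1=1$.

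Next I would read off the final affine state. The main/accepting coordinates carry the control acceptance probability, call it $\Pr[\text{accept state}]$, while the magnitude accumulated in the $e_2^\ell,e_3^\ell$ coordinates is proportional to $x\sum_\ell|c_\ell|$, where $c_\ell$ is the final value of counter $\ell$. Hence $f_{M'}(w)=\dfrac{\Pr[\text{accept state}]}{1+2x\sum_\ell|c_\ell|}$. When every counter ends at $0$ this reduces to the control acceptance probability, which is $\ge 1-\epsilon$ on members and $\le\epsilon$ on non-members by the bounded-error promise on $M$; when some counter is nonzero the denominator is at least $1+2x$, which pushes $f_{M'}$ below any target error once $x$ is large. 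Choosing $x$ so that $\frac{1}{1+2x}\le\epsilon<\frac12$ then yields bounded-error recognition, giving $L\in\baffine$.

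The step I expect to be the main obstacle is precisely the coupling between the probabilistic branching and the linear counter accumulation. Because the weighting is applied only at the end, the signed quantities fed into $e_2^\ell,e_3^\ell$ along different computation branches are first summed by the linear dynamics, so a branch with a large nonzero counter can have its contribution cancelled by a branch with the opposite counter; strictly speaking $e_2^\ell$ then tracks an expectation $x\,\mathbb{E}[c_\ell]$ rather than a per-branch magnitude. This cancellation is harmless exactly when the counter increments depend only on the scanned symbol, as they do for $\mathtt{EQ}$ and its gadget $M_2(x)$, since then $c_\ell$ is a function of $w$ alone and ``$\vec c=0$'' is path-independent. Handling state-dependent updates without letting the nonzero-counter mass be masked by cancellation is the delicate point, and the cleanest resolution I would aim for is to drive the counter gadget directly by the input symbols so that the per-branch counter value coincides with the global one.
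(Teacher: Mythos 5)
Your construction is the natural one, and it matches what the paper itself gestures at (the paper states this theorem \emph{without} a written proof, as a remark following the blind-counter corollary); but the difficulty you flag at the end is not merely ``delicate''---it is fatal to the proof as it stands. In the model the theorem is about, counter increments are attached to the probabilistic \emph{transitions}, not to the scanned symbol, so different branches update the counters differently. Your shared gadget coordinates then carry only $x\,\mathbb{E}[c_\ell]$, and the identity $f_{M'}(w)=\Pr[\text{accept state}]/(1+2x\sum_\ell|c_\ell|)$, on which your whole error analysis rests, is no longer available. The failure is concrete: consider the probabilistic one-blind-counter automaton for $\mathtt{EQ}$ that on $\cent$ branches with probability $\frac12$ each into states $q_+$ and $q_-$, where $q_+$ adds $+1$ per $a$ and $-1$ per $b$, $q_-$ does the opposite, and both branches end in an accepting state. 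Every branch accepts iff $|w|_a=|w|_b$, so this machine recognizes $\mathtt{EQ}$ with error $0$. Yet along the two branches the increments fed into your gadget are opposite at every step, so $\mathbb{E}[c]\equiv 0$ identically, the gadget coordinates stay at $0$, and $f_{M'}(w)=1$ for \emph{every} input, member or not. (If you instead keep one gadget per control state, the same cancellation occurs the moment the two branches are merged on $\dollar$.) Note that the bounded-error hypothesis on $M$ cannot rescue the construction, since this counterexample is even exact.

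Your proposed resolution---``drive the counter gadget directly by the input symbols''---is therefore not a resolution but a retreat to a weaker statement: it proves the theorem only for machines whose counter updates are symbol-determined, i.e.\ essentially for a PFA tensored with a deterministic blind-counter automaton. That special case your argument does handle correctly (for members all counters are zero and $f_{M'}\geq 1-\epsilon$; for non-members either some counter is nonzero, forcing $f_{M'}\leq \frac{1}{1+2x}$, or all counters are zero and $f_{M'}\leq\epsilon$). For the general statement a genuinely new idea is required, because the obvious linear statistics of branch-dependent counters all fail: first moments cancel across branches, second moments and exponentials like $\mathbb{E}[2^{c_\ell}]$ are blown up by stray branches of probability $\leq\epsilon$ whose counters have size $\Theta(|w|)$, and bounded (trigonometric) statistics admit no uniform gap over nonzero integer counter values. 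Since the paper offers no proof of this theorem either, the missing step cannot be borrowed from there; as written, your proposal establishes only the symbol-driven case, not the claim as stated.
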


\subsection{Changing cutpoint}

Before giving the other results, we present a few technical results regarding the cases where the choice of a cutpoint is essential. 

For given automata $M_1$ and $M_2$, we say that $ L(M_1,\lambda_1) $ is equivalent to $ L(M_2,\lambda_2) $, denoted
\[
  L(M_1,\lambda_1) \equiv L(M_2,\lambda_2)
\]
if for any input $ w \in \Sigma^* $,
\begin{enumerate}
  \item $ f_{M_1}(w) > \lambda_1 \rightarrow f_{M_2}(w) > \lambda_2 $,
  \item $ f_{M_1}(w) = \lambda_1 \rightarrow f_{M_2}(w) = \lambda_2 $, and
  \item $ f_{M_1}(w) < \lambda_1 \rightarrow f_{M_2}(w) < \lambda_2 $.
\end{enumerate}

Up to date, it is a folkloric result that for any given $n$-state PFA or QFA, say $ M_1 $, and a cutpoint $ \lambda_1 \in [0,1] $, we can define another PFA or QFA $ M_2 $ with $ (n+1) $ states and cutpoint $ \lambda_2 \in (0,1) $ such that
\[
  L(M_1,\lambda_1) \equiv L(M_2,\lambda_2).
\]

For AfAs, we can obtain the same results with different state overheads.

\begin{theorem}
  Let $ M_1 $ be an $n$-state AfA and $ \lambda_1 \in (0,1) $, then for any $ \lambda_2 \in (0,1) $, we can define another AfA $ M_2 $ with $ (n+2) $ states such that 
  \[
    L(M_1,\lambda_1) \equiv L(M_2,\lambda_2).
  \]
\end{theorem}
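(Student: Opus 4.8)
The plan is to realize the change of cutpoint as a fixed, monotone reshaping of the accepting value that keeps every input on the correct side of the threshold. Write $N_a=\sum_{e_k\in E_a}|v_f[k]|$ and $N_r=|v_f|-N_a$ for the accepting and rejecting magnitudes of the final state of $M_1$, so that $f_{M_1}(w)=\frac{N_a}{N_a+N_r}$. The three conditions defining $L(M_1,\lambda_1)\equiv L(M_2,\lambda_2)$ hold as soon as the sign of $f_{M_2}(w)-\lambda_2$ equals the sign of $f_{M_1}(w)-\lambda_1$ for every $w$; in particular it suffices to build $M_2$ so that $f_{M_2}(w)=g(f_{M_1}(w))$ for one fixed continuous strictly increasing $g\colon[0,1]\to[0,1]$ with $g(\lambda_1)=\lambda_2$. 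First I would fix this $g$: for $\lambda_2<\lambda_1$ take $g(t)=ct$ with $c=\lambda_2/\lambda_1\in(0,1)$; for $\lambda_2>\lambda_1$ take $g(t)=1-c(1-t)$ with $c=\frac{1-\lambda_2}{1-\lambda_1}\in(0,1)$; and $\lambda_2=\lambda_1$ is trivial.

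Next I would implement $g$ by a magnitude-preserving transfer carried out by one extra state at the right end-marker. The guiding identity is that for $0<c<1$ and any real $x$ one has $|cx|+|(1-c)x|=|x|$; hence if the value carried by the accepting state is split into $cx$ (kept) and $(1-c)x$ (moved to a fresh rejecting state), the accepting magnitude is multiplied by $c$ while $|v_f|$ is left unchanged, giving exactly $f_{M_2}=c\,f_{M_1}$. Concretely, let $M_2$ have states $E\cup\{e_{n+1},e_{n+2}\}$, start in $e_s$ with $e_{n+1},e_{n+2}$ at $0$, and on the left end-marker and on the letters act as $M_1$ on the first $n$ coordinates and as the identity on $e_{n+1},e_{n+2}$, so these two states stay $0$ throughout the scan. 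Assuming for the lowering case that $M_1$ has a single accepting state $e_1$, I set the end-marker operator of $M_2$ to be the composition of $M_1$'s own end-marker operator (on the first $n$ coordinates) with the transfer $T$ that sends $e_1\mapsto c\,e_1+(1-c)\,e_{n+1}$ and fixes every other basis state, declaring $e_{n+1}$ rejecting. Each column of $T$ sums to $1$, so $T$ is a legitimate affine operator; $e_{n+1}$ contributes $(1-c)|v_f[1]|$ to the rejecting magnitude, $|v_f|$ is preserved, and $f_{M_2}(w)=c\,f_{M_1}(w)=g(f_{M_1}(w))$. The raising case is symmetric, using $e_{n+2}$ as a fresh accepting state absorbing a $(1-c)$ fraction of the (single) rejecting state; having both spare states is what lets one uniform construction cover both directions, accounting for the $(n+2)$-state bound. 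Since $g$ is strictly increasing with $g(\lambda_1)=\lambda_2$, the three conditions follow at once.

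The hard part will be that the weighting operator uses the $l_1$ magnitude, so $N_a$ is a sum of absolute values rather than the absolute value of a sum, and moving weight between states can create cancellation. If the accept weight is spread over several accepting states $e_k$, routing the freed amounts $(1-c)v_f[k]$ into a single fresh state produces the value $(1-c)\sum_{k\in E_a}v_f[k]$, whose magnitude is $(1-c)\,\bigl|\sum_{k\in E_a}v_f[k]\bigr|$ and can be strictly smaller than $(1-c)N_a$; the transfer then fails to preserve $|v_f|$, and $f_{M_2}$ ceases to be a function of $f_{M_1}$, breaking conditions 2 and 3. This same phenomenon is precisely why the naive affine analogue of the probabilistic ``coin-flip'' cutpoint shift does not work: a frozen extra state contributes a constant magnitude while the denominator $|v_f|$ varies with $w$, whereas for PFAs $|v_f|\equiv1$. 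The crux, therefore, is to guarantee that the whole accept weight (for lowering) or reject weight (for raising) is carried on a single coordinate before the transfer is applied, i.e.\ to pass through a single-accepting-state (resp.\ single-rejecting-state) form; arranging this without incurring cancellation—or otherwise coping with the varying magnitude $|v_f|$—is the main obstacle, and is the step I would expect to require the most care.
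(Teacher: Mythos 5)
Your construction is incomplete, and you say so yourself: it works only when all the accepting weight (for lowering) or all the rejecting weight (for raising) sits on a single coordinate, and you defer the general case to a reduction to ``single-accepting-state form.'' That reduction is a dead end for AfAs: an affine operator can only form linear combinations of the entries, and the accepting magnitude $N_a=\sum_{e_k\in E_a}|v_f[k]|$ is not a linear function of $v_f$ --- any attempt to collect several accepting entries into one coordinate replaces $\sum_k |v_f[k]|$ by $\bigl|\sum_k c_k v_f[k]\bigr|$, which is strictly smaller exactly when signs disagree. This is the cancellation phenomenon you flagged, and it blocks your route rather than merely complicating it; so the ``step requiring the most care'' is in fact a step the proof cannot take, and the argument has a genuine gap.

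The missing idea (the paper's) is to drop your stronger demand that $f_{M_2}=g(f_{M_1})$ for a fixed monotone $g$, and instead make only the \emph{comparison} with $\lambda_2$ come out right. On the right end-marker, scale every accepting coordinate in place by $\lambda_2/\lambda_1$ and every non-accepting coordinate in place by $(1-\lambda_2)/(1-\lambda_1)$: coordinate-wise scaling by a positive constant is cancellation-free, so the accepting and rejecting magnitudes become $\frac{\lambda_2}{\lambda_1}|A|$ and $\frac{1-\lambda_2}{1-\lambda_1}|R|$ regardless of how many states carry them or with what signs. This destroys the affine normalization (the entries now sum to some $T\neq 1$), and that is what the two extra states are for: set the $(n+1)$-st entry (declared accepting) to $\lambda_2(1-T)$ and the $(n+2)$-nd entry (rejecting) to $(1-\lambda_2)(1-T)$, restoring sum $1$. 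The crucial observation is that these two contributions, $\lambda_2|1-T|$ and $(1-\lambda_2)|1-T|$, stand in exactly the ratio $\lambda_2:(1-\lambda_2)$, hence are neutral in the threshold test: with $|A'|=\frac{\lambda_2}{\lambda_1}|A|+\lambda_2|1-T|$ and $|R'|=\frac{1-\lambda_2}{1-\lambda_1}|R|+(1-\lambda_2)|1-T|$ one gets
\[
(1-\lambda_2)\,|A'| \gtrless \lambda_2\,|R'|
\iff
\frac{|A|}{\lambda_1} \gtrless \frac{|R|}{1-\lambda_1}
\iff
f_{M_1}(w) \gtrless \lambda_1 ,
\]
with equality mapping to equality. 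Note that here $f_{M_2}(w)$ genuinely is \emph{not} a function of $f_{M_1}(w)$ alone --- it depends on $|1-T|$, which varies with $w$ --- yet all three conditions of the equivalence hold. Your identity $|cx|+|(1-c)x|=|x|$ is the right instinct for one coordinate; in-place scaling plus this balanced two-state compensation is its multi-coordinate replacement, and it is also why the two spare states are used jointly in one uniform construction rather than one per direction.
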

\begin{proof}
  The AfA $ M_2 $ is obtained from $ M_1 $ by adding two more states and making certain modifications. Let $ w \in \Sigma^* $ be an input and $ v_f $ be the final vector of $ M_1 $ after reading $ \tilde{w} $. We can represent $v_f$ as the summation of two orthogonal vectors: $ v_f = v^a_f + v^r_f $, where $ v^a_f $ is the projection of $ v_f $ on the space spanned by the accepting states, i.e. $ v^a_f $ is obtained from $ v_f $ by setting entries of non-accepting states to zeros, and $ v^r_f = v_f - v^a_f  $. We define $ |A| $ and $ |R| $ as the $ l_1 $-norms of $ v^a_f $ and $ v^r_f $ respectively. Remark that $ f_M(w) = \frac{|A|}{|A|+|R|} $ and, 
  \begin{itemize}
    \item if $ f_M(w) > \lambda_1 $, then $ \frac{|A|}{|R|} > \frac{\lambda_1}{1-\lambda_1}  $,
    \item if $ f_M(w) = \lambda_1 $, then $ \frac{|A|}{|R|} = \frac{\lambda_1}{1-\lambda_1}  $, and
    \item if $ f_M(w) < \lambda_1 $, then $ \frac{|A|}{|R|} < \frac{\lambda_1}{1-\lambda_1}  $.
  \end{itemize}

  The AfA $ M_2 $ follows the same computation of $ M_1 $ except that it applies an additional affine transformation on the right end-marker $ \dollar $, say $ A'_{\dollar} $: The final state of $ M_2 $ is 
  \[
    u_f = A'_\dollar \myvector{v_f \\ 0 \\ 0},
  \]
  where the effect of $ A'_\dollar $ is as follows:
  \begin{itemize}
    \item each value of the accepting state(s) in $ v_f $ is multiplied by $ \frac{\lambda_2}{\lambda_1} $,
    \item each value of the non-accepting state(s) in $ v_f $ is multiplied by $ \frac{1-\lambda_2}{1-\lambda_1} $,
    \item the value of the $ (n+1) $-th state in $ u_f $ is set to $ \lambda_2(1-T) $, and
    \item the value of the $ (n+2) $-th state in $ u_f $ is set to $ (1-\lambda_2)(1-T) $,
  \end{itemize}
  where $ T  $ is the summation of all entries except the last two in $ u_f $: $ T = \sum_{i=1}^n u_f[i] $. It is easy to verify that $ u_f $ is an affine state:
  \[
    \sum_{i=1}^{n+2} u_f[i] = T + \lambda_2 \left(1-T\right) + (1-\lambda_2)\left(1-T\right) = 1. 
  \]
  The accepting states of $M_2$ is formed by the accepting state(s) of $ M_1 $ and the $ (n+1) $-th state. For $ u_f $, we similarly define $ u^a_f $ and $ u^r_f $. Then, we define $ |A'| $ and $ |R'| $ as the $ l_1 $-norms of $ u^a_f $ and $ u^r_f $:
  \[
    |A'| =  \frac{\lambda_2}{\lambda_1} |A| + \lambda_2 \left| 1-T \right|
    \mbox{ and }
    |R'| = \frac{1-\lambda_2}{1-\lambda_1} |R| + (1-\lambda_2) \left| 1-T \right|.
  \]

  Now, we are ready to verify our construction:
  \begin{itemize}
    \item If $ f_{M_1}(w) = \lambda_1 $, we have $ \frac{|A|}{|R|} = \frac{\lambda_1}{1-\lambda_1}  $. Then, we calculate the following ratio:
      \[
	\frac{|A'|}{|R'|} = \frac{  \frac{\lambda_2}{\lambda_1} |A| + \lambda_2 \left| 1-T \right| }{ \frac{1-\lambda_2}{1-\lambda_1} |R| + (1-\lambda_2) \left| 1-T \right| }
      \]
      We can replace $ |A| $ with $ |R| \frac{\lambda_1}{1-\lambda_1}  $ in the above formula:
      \[
	\frac{|A'|}{|R'|} = \frac{ \lambda_2 \left( \frac{|R|}{1-\lambda_1 } + \left| 1-T \right|  \right) }{ (1 - \lambda_2) \left( \frac{|R|}{1-\lambda_1 } + \left| 1-T \right|  \right) } = \frac{ \lambda_2 }{1- \lambda_2}.
      \]
      That means $ f_{M_2}(w) = \lambda_2 $.
    \item If $ f_{M_1}(w) > \lambda_1 $, we have $ \frac{|A|}{|R|} > \frac{\lambda_1}{1-\lambda_1}  $.  This time, we replace $ |A| $ with 
      \[
	|R| \frac{\lambda_1}{1-\lambda_1} + \delta \lambda_2
      \] 
      for some  $ \delta>0 $ in the ratio of $ \frac{|A'|}{|R'|} $. Then, we get
      \[
	\frac{|A'|}{|R'|} = \frac{ \lambda_2 \left( \frac{|R|}{1-\lambda_1 } + \delta + \left| 1-T\right|  \right) }{ (1 - \lambda_2) \left( \frac{|R|}{1-\lambda_1 } + \left| 1- T \right|  \right) } > \frac{ \lambda_2 }{1- \lambda_2}.
      \]
      That means $ f_{M_2}(w) > \lambda_2 $.
    \item For the case, $ f_{M_1}(w) < \lambda_1 $, we can obtain that $ f_{M_2}(w) < \lambda_2 $ in the same way by replacing $ -\delta \lambda_2 $ with $ \delta \lambda_2 $.  
  \end{itemize}
  Therefore, $ L(M_1,\lambda_1) \equiv L(M_2,\lambda_2) $. 	
  \qed\end{proof}

The construction in the above proof does not work when $ \lambda_1 = 0 $ or $ \lambda_1 = 1 $. Therefore, we give another proof.

\begin{theorem}
  Let $ M_1 $ be an $n$-state AfA with $ k < n $ non-accepting state(s), then for any $ \lambda \in (0,1) $, we can define another AfA $ M_2 $ with $ (n+k) $ states such that 
  \[
    L(M_1,0) \equiv L(M_2,\lambda).
  \]
\end{theorem}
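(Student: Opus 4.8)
The reason the previous construction breaks down at $\lambda_1 = 0$ is that it rescales the accepting mass by $\frac{\lambda_2}{\lambda_1}$, which is undefined. The plan is therefore to exploit what cutpoint $0$ actually means. Writing $v_f$ for the final vector of $M_1$ and $|A|, |R|$ for the $l_1$-norms of its accepting and non-accepting parts (as in the previous proof), we have $f_{M_1}(w) = \frac{|A|}{|A|+|R|}$, so $f_{M_1}(w) = 0$ is equivalent to $|A| = 0$ and $f_{M_1}(w) > 0$ is equivalent to $|A| > 0$; the third clause of $\equiv$ (the ``$<$'' case) is vacuous because $f_{M_1} \geq 0$ always. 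Hence I only need an $M_2$ whose final value is \emph{exactly} $\lambda$ on every rejected word and strictly above $\lambda$ on every accepted word.

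First I would build $M_2$ by adjoining $k$ fresh states $e_{n+1}, \ldots, e_{n+k}$, all declared accepting, pairing $e_{n+i}$ with the $i$-th non-accepting state of $M_1$. Away from the right end-marker, $M_2$ runs exactly like $M_1$, keeping the new states idle at $0$ (extend each $A_\sigma$ by the identity on the new coordinates). On $\dollar$ it applies, after $M_1$'s own $\dollar$-operator, an extra affine map $A'_\dollar$ that leaves every accepting entry of $v_f$ unchanged, replaces each non-accepting entry $r_i$ by $(1-\lambda)r_i$ in place, and deposits $\lambda r_i$ into the paired new state $e_{n+i}$. The first thing to check is that $A'_\dollar$ is a legitimate affine transformation: each accepting column is a standard basis column, and each non-accepting column sends mass $(1-\lambda)$ to itself and $\lambda$ to its partner, so every column sums to $1$.

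Then I would compute the weights of the final state $u_f$. The new accepting $l_1$-mass is $|A'| = |A| + \lambda|R|$ and the new rejecting mass is $|R'| = (1-\lambda)|R|$, giving $f_{M_2}(w) = \frac{|A| + \lambda|R|}{|A| + |R|}$. If $f_{M_1}(w) = 0$ then $|A| = 0$; moreover the non-accepting entries of $v_f$ then sum to $1$, so $|R| \geq 1 > 0$, and therefore $f_{M_2}(w) = \lambda$ exactly. If $f_{M_1}(w) > 0$ then $|A| > 0$, and $f_{M_2}(w) > \lambda$ reduces to $(1-\lambda)|A| > 0$, which holds. Together with the vacuous third case this yields $L(M_1, 0) \equiv L(M_2, \lambda)$.

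I expect the only real subtlety to be forcing the exact equality $f_{M_2}(w) = \lambda$ to hold \emph{uniformly} across all rejected words, whose $|R|$ can be arbitrarily large because of sign cancellations among the non-accepting entries. The key idea that resolves this is the $\lambda/(1-\lambda)$ split: it makes both the accepting and rejecting contributions of $u_f$ proportional to the same word-dependent factor $|R|$, so that $|R|$ cancels in the ratio and the dependence on the particular rejected configuration disappears; the bound $|R| \geq 1$ (hence $|R| \neq 0$) is what guarantees the ratio is well defined in the first place. The remaining checks—that the $k$ idle states stay at $0$ throughout the scan and that $A'_\dollar$ composes correctly with any pre-existing $\dollar$-operator of $M_1$—are routine.
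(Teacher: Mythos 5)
Your proposal is correct and is essentially identical to the paper's construction: the same $k$ auxiliary accepting states, the same $\dollar$-operator splitting each non-accepting entry $r_i$ into $(1-\lambda)r_i$ in place and $\lambda r_i$ in the paired new state, yielding $f_{M_2}(w) = \frac{|A|+\lambda|R|}{|A|+|R|}$. In fact you supply more detail than the paper, which leaves the final weight computation (including the $|R|\geq 1$ observation) as "easy to derive."
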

\begin{proof}
  Suppose that the first $ k $ states of $ M_1 $ are non-accepting states. The AfA $ M_2 $ is obtained by modifying $ M_1 $. The first $ k $ states of $ M_2 $ are non-accepting and all the others are accepting states. Until reading the right end-marker, $ M_2 $ trace the computation of $ M_1 $ exactly with the same states. On the right end-marker, $ M_2 $ applies first the operator of $ M_1 $ and then applies an additional one that (1) multiplies the value of each non-accepting $ i $-th state with $ (1-\lambda) $ ($ 1 \leq i \leq k $) and (2) transfers the values of the non-accepting states to the additional states after multiplying $ \lambda $, i.e. the value of the $ (n+i) $-th state is set to the multiplication of the value of the $ i $-th state and the value of $ \lambda $.

  Therefore, it is easy to derive that for any $ w \in \Sigma^* $, if $ f_{M_1}(w) = 0 $, then $ f_{M_2}(w) = \lambda  $; and, if $ f_{M_1}(w) > 0 $, then $ f_{M_2}(w) > \lambda  $. 
  \qed\end{proof}

\begin{theorem}
  Let $ M_1 $ be an $n$-state AfA with $ k < n $ accepting state(s), then for any $ \lambda \in (0,1) $, we can define another AfA $ M_2 $ with $ (n+k) $ states such that 
  \[
    L(M_1,1) \equiv L(M_2,\lambda).
  \]
\end{theorem}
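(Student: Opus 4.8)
The plan is to mirror the construction of the preceding theorem (the cutpoint-$0$ case), exchanging the roles of accepting and non-accepting states. The driving observation is that $f_{M_1}(w) = 1$ holds precisely when the $l_1$-norm of the non-accepting entries of the final vector vanishes, which is the exact dual of the condition $f_{M_1}(w) = 0$ (vanishing accepting norm) that powered the previous proof. So I expect the same template to go through after a systematic accept/reject swap.

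First I would assume, without loss of generality, that the first $k$ states of $M_1$ are the accepting ones and the remaining $n-k$ are non-accepting. Writing $v_f$ for the final vector of $M_1$ on $\tilde w$, I set $|A| = \sum_{i=1}^{k} |v_f[i]|$ and $|R| = \sum_{i=k+1}^{n} |v_f[i]|$, so that $f_{M_1}(w) = \frac{|A|}{|A|+|R|}$ and hence $f_{M_1}(w) = 1 \Leftrightarrow |R| = 0$. Then I would build $M_2$ with $n+k$ states, tracing $M_1$ identically until the right end-marker, where $M_2$ applies $M_1$'s operator for $\dollar$ followed by one additional affine transformation $A'_{\dollar}$ with the following effect: (i) multiply each accepting entry $v_f[i]$ $(1 \le i \le k)$ by $\lambda$, keeping it an accepting entry; (ii) leave each non-accepting entry $v_f[i]$ $(k < i \le n)$ untouched; and (iii) set the value of each new state $n+i$ $(1 \le i \le k)$, declared non-accepting, to $(1-\lambda)\,v_f[i]$. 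A one-line check shows the resulting vector still sums to $1$, so $A'_{\dollar}$ is a legitimate affine operator.

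Next I would compute the weights of $M_2$. The accepting norm is $|A'| = \lambda |A|$ and the rejecting norm is $|R'| = |R| + (1-\lambda)|A|$, so the total magnitude is again $|A|+|R|$ (preserved exactly, as in the dual proof) and therefore $f_{M_2}(w) = \frac{\lambda |A|}{|A|+|R|}$. From this closed form both required implications are immediate: if $|R| = 0$ then $f_{M_2}(w) = \lambda$, and if $|R| > 0$ then $f_{M_2}(w) < \lambda$. Because the affine-state magnitude bound gives $|A|+|R| = |v_f| \ge 1 > 0$, no division-by-zero can occur, and in the subcase $|R| = 0$ one automatically has $|A| \ge 1 > 0$, so the value is genuinely $\lambda$.

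I do not anticipate a real obstacle; the construction is essentially forced by duality and every step is routine verification. The single point deserving a moment of care is the strict inequality for non-members: for $|R| > 0$ one needs $\frac{\lambda|A|}{|A|+|R|} < \lambda$, which reduces to $0 < \lambda|R|$ and holds since $\lambda \in (0,1)$, and this remains correct even in the degenerate case $|A| = 0$ (where $f_{M_2}(w) = 0 < \lambda$). Assembling these cases yields $L(M_1,1) \equiv L(M_2,\lambda)$.
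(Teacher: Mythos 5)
Your construction is exactly the one the paper intends: the paper's proof is a one-line remark that the argument is the previous theorem's (cutpoint-$0$) construction with the roles of accepting and non-accepting states exchanged, and your proposal carries out precisely that dualization (scale accepting entries by $\lambda$, shunt $(1-\lambda)$ of each into new non-accepting states), with the verification $f_{M_2}(w)=\frac{\lambda|A|}{|A|+|R|}$ done correctly. The proof is correct and essentially identical in approach; no further changes are needed.
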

\begin{proof}
  The proof is the analogous to the proof of the previous theorem, except that we focus on accepting state(s) instead of non-accepting state(s).
  \qed\end{proof}

\subsection{Cutpoint languages}

Lapin\v{s} \cite{Lap74} showed that the language $ \lapins = \{ a^{m}b^{n}c^{p} \mid m^{4} > n^{2} > p > 0 \} $ is nonstochastic and it is not in $ \stoc $. It is clear that the following language is also nonstochastic:
\[
  \lapins' = \{ w \in \{ a,b,c \}^* \mid |w|^4_a > |w|^2_b > |w|_c \}
\]
or equivalently
\[
  \lapins' = \{ w \in \{ a,b,c \}^* \mid |w|^2_a > |w|_b \mbox{ and } |w|^2_b > |w|_c \}.
\]

\begin{theorem}
  \label{thm:lapins}
  The language $\lapins'$ is recognized by an AfA with cutpoint $ \frac{1}{2} $. 
\end{theorem}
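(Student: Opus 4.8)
The plan is to construct a single AfA that reads the whole input, uses affine (sum-preserving) transitions to compute the two integer quantities $D_1 = |w|_a^2 - |w|_b$ and $D_2 = |w|_b^2 - |w|_c$ in dedicated states, and then, on the right end-marker, applies one further affine transformation routing these quantities into accepting/non-accepting states so that the weighting operator crosses $\tfrac12$ exactly on membership. The guiding observation is that $w \in \lapins'$ if and only if $D_1 \geq 1$ and $D_2 \geq 1$, both being integers.

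First I would show how to accumulate the needed polynomials in the letter counts using operators whose columns sum to $1$. I keep a constant state $e_0$ (always holding $1$), three counter states holding $|w|_a,|w|_b,|w|_c$, and two accumulator states. Reading an $a$ increments the $a$-counter (by adding $e_0$) and adds twice the \emph{old} $a$-counter to the first accumulator; after reading $m$ letters $a$ this accumulator holds $2\binom{m}{2}=m^2-m$, so the $a$-counter plus the accumulator equals $m^2$, and symmetrically for $b$. Since reading $b$ or $c$ leaves the $a$-states untouched, these values depend only on the letter counts and not on their order, which is exactly what $\lapins'$ requires. Each update is made affine (columns summing to $1$) by sending the excess into a garbage state $e_g$; on every reachable state one has $e_g = 1 - e_0 = 0$.

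Next I build the end-marker operator $A_\dollar$. Its rows are chosen so that the final affine state has accepting entries $\tfrac{D_1}{2},\tfrac{D_2}{2}$ and non-accepting entries $\tfrac{1-D_1}{2},\tfrac{1-D_2}{2}$, with all remaining entries $0$; the constant $1$'s are supplied by $e_0$, and these four entries sum to $1$. The delicate bookkeeping step is to make $A_\dollar$ a legitimate affine transformation (every column summing to $1$) while forcing the auxiliary entries to be exactly $0$: the residual required to balance the columns is collected into $e_g$, whose value is identically $\sum_i v_i - e_0 = 0$, so it contributes nothing to the weighting.

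Correctness then rests on one elementary identity: for an integer $D$, $|D|-|1-D| = +1$ if $D \geq 1$ and $-1$ if $D \leq 0$. The accepting value becomes
\[
  f_M(w) = \frac{|D_1|+|D_2|}{\,|D_1|+|D_2|+|1-D_1|+|1-D_2|\,},
\]
so $f_M(w) > \tfrac12$ iff $\bigl(|D_1|-|1-D_1|\bigr) + \bigl(|D_2|-|1-D_2|\bigr) > 0$ iff $D_1 \geq 1$ and $D_2 \geq 1$, i.e. $w \in \lapins'$ (when exactly one condition holds the value is precisely $\tfrac12$, and the string is correctly rejected). I expect the main obstacle to be conceptual rather than computational: a single affine threshold cannot carve out the ``both positive'' (AND) region, so the construction must exploit the absolute values of the weighting operator, and the displayed identity is exactly what turns the magnitude comparison into a conjunction of the two quadratic conditions.
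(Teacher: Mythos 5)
Your proof is correct, and it combines the two comparisons by a genuinely different mechanism than the paper's. Both arguments rest on the same elementary observation --- for an integer $D$, the weighting operator turns the pair $(D,\,1-D)$ into a $\pm 1$ discriminator, since $|D|-|1-D| = +1$ if $D\geq 1$ and $-1$ if $D\leq 0$ --- and both use the same counter/accumulator machinery to accumulate $D_1=|w|_a^2-|w|_b$ and $D_2=|w|_b^2-|w|_c$ affinely. The difference is in how the conjunction is realized. The paper builds two sub-automata (one ending with values of the form $(x^2,\,y,\,1-x^2-y)$, the other with $(y^2-z,\,1-y^2+z)$), \emph{tensors} them, and uses multiplication: the accepting-minus-rejecting balance becomes $x^2\bigl(|y^2-z|-|1-y^2+z|\bigr)-y = \pm x^2 - y$, so one condition enters through a magnitude and the other through a sign. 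You instead treat both conditions symmetrically and \emph{add} the discriminators: with final state $\bigl(\tfrac{D_1}{2},\tfrac{D_2}{2},\tfrac{1-D_1}{2},\tfrac{1-D_2}{2},0,\ldots\bigr)$ the balance is $s_1+s_2$ with $s_i\in\{+1,-1\}$, which is positive iff both are $+1$, and equals $0$ (giving $f_M(w)=\tfrac12$ exactly, correctly rejected) when exactly one condition holds. Your route avoids the tensor product entirely, so it needs fewer states and a single, simpler end-marker operator; the paper's multiplicative route is the one that composes further, since a sum of $k\geq 3$ signs being positive at cutpoint $\tfrac12$ gives a majority rather than an AND, whereas multiplication by a sign can be nested.

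One slip worth fixing: the claim that ``on every reachable state $e_g = 1-e_0 = 0$'' during the reading phase is false. The garbage entry is forced to hold $1$ minus the sum of all other entries, i.e.\ $-\bigl(|w|_c+|w|_a^2+|w|_b^2\bigr)$, which is generally nonzero. This is harmless for your argument: the garbage column never feeds back into the working states, so the counters and accumulators evolve as intended, and all that matters is the garbage value \emph{after} the end-marker. There your bookkeeping is right --- every column other than that of $e_0$ needs residual exactly $+1$ routed to $e_g$, so the new garbage value is $\sum_i v_i - e_0 = 1-1 = 0$ --- hence the final affine state is exactly as displayed and the weighting computation goes through.
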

\begin{proof} 

  We start with a basic observation about AfAs. Let $ \myvector{m\\1-m} $ be an affine state, where $ m $ is an integer. Then:
  \begin{itemize}
    \item If $ m >0 $, $ |m| > |1-m| = m-1 $ since $m-1$ is closer to 0. 
    \item If $ m \leq 0 $, $ |m| = -m < |1-m| = -m+1  $ since $ -m $ is closer to 0.
  \end{itemize}    
  So, if the first state is an accepting state and the second one is not, then we can determine whether $ m>0 $ with cutpoint $ \frac{1}{2} $, which can be algorithmically useful.

  For a given input $ w \in \{ a,b,c \}^* $, we can easily encode $ |w|_a $, $ |w|_b $, and $ |w|_c $ into the values of some states. Our aim is to determine $ |w|^2_a > |w|_b $ and $ |w|^2_b > |w|_c  $. Even though PFAs and QFAs can make the similar encodings, they can make only a single comparison. Here, we show that AfAs can make both compressions. First we present some encoding integer matrices. If we apply matrix $\mymatrix{cc}{1 & 0 \\1 & 1}  $  $ m $ times to $ \myvector{1\\0} $, the value of the second entry becomes $ m $:
  \[
    \mymatrix{cc}{1 & 0 \\1 & 1} \myvector{1\\0} = \myvector{1\\1}
    \mbox{ and }
    \mymatrix{cc}{1 & 0 \\1 & 1} \myvector{1\\x} = \myvector{1\\x+1} 
  \]
  \[
    \Rightarrow
    \mymatrix{cc}{1 & 0 \\1 & 1}^m \myvector{1\\0} = \myvector{1\\m}.
  \]
  For obtaining $ m^2 $, we can use the following initial state and matrix:
  \[
    \mymatrix{ccc}{1 & 0 & 0 \\ 2 & 1 & 0 \\ 0 & 1 & 1} \myvector{1\\ 0\\ 0} = \myvector{1\\1\\0}
    \mbox{ and }
    \mymatrix{ccc}{1 & 0 & 0 \\ 2 & 1 & 0 \\ 0 & 1 & 1} \myvector{1\\2x-1\\(x-1)^2} = \myvector{1\\2x+1\\x^2}
  \] 
  \[
    \Rightarrow
    \mymatrix{ccc}{1 & 0 & 0 \\ 2 & 1 & 0 \\ 0 & 1 & 1}^m \myvector{1\\0\\0} = \myvector{1\\2m-1\\m^2}.
  \]
  We can easily embed such matrices into affine operators (by using some additional states) and then we can obtain the value like $ |w|_a $ and $ |w|_a^2 $ as the values of some states. If required, the appropriate initial states can be prepared on the left end-marker. Moreover, on the right end-marker, we can make some basic arithmetic operations by using a combination of more than one affine operators. Furthermore, we can easily tensor two AfA and obtain a single AfA that indeed can simulate both machines in parallel.

  Let $ |w|_a = x $, $ |w|_b = y $, and $ |w|_c =z $, and, $ M_1 $ and $ M_2 $ be two AfAs that respectively have the following final states after reading $ w $: 
  \[
    v_f(M_1) = \mymatrix{c}{x^2 \\ y \\ 1-x^2-y \\ 0 \\ \vdots \\ 0} ~~~
    v_f(M_2) = \mymatrix{c}{y^2-z \\ 1-y^2+z \\ 0 \\ \vdots \\ 0}.
  \]
  If we tensor both machines and apply an affine transformation to arrange the values in a certain way, the final state will be
  \[
    \arraycolsep=1pt\def\arraystretch{2}
    v_f = \mymatrix{c}{
      x^2 ( y^2 - z ) \\
      x^2 ( 1 - y^2 + z ) \\
      y \\
      \dfrac{1-T}{2}  \\ 
      \dfrac{1-T}{2} \\
      0 \\ \vdots \\ 0
    },
  \]
  where $ T $ is the summation of the first three entries. We select the first and fourth states as accepting states. Then, the difference between the accepting and the remaining values is
  \[
    \Delta = x^2 ( | y^2-z | - | 1-y^2 +z | ) - y.
  \]
  Remark that $ \delta = | y^2-z | - | 1-y^2 +z | $ is either 1 or $ -1 $.   
  \begin{itemize}
    \item If $ w $ is a member, then  $ \Delta = x^2 (1) - y $, which is greater than 0.
    \item If $ w $ is not a member, then we have different cases. 
      \begin{itemize}
	\item $ x^2 \leq y $: $ \Delta $ will be either $ x^2 - y $ or $ -x^2 - y $ and in both case it is equal to zero or less than zero.
	\item $ x^2 > y $ but $ y^2 \leq z $: $ \Delta $ will be $ -x^2 - y $ and so less than zero.
      \end{itemize}
  \end{itemize}
  Thus, the final AfA can recognize $\lapins'$ with cutpoint $ \frac{1}{2} $.
  \qed\end{proof}

Since AfAs can recognize a nonstochastic language with cutpoint, they are more powerful than PFAs and QFAs with cutpoint (and also with unbounded-error).

\begin{corollary}	
  $ \stoc \subsetneq \affine $.
\end{corollary}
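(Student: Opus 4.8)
The plan is to establish the corollary in two moves: first the inclusion $\stoc \subseteq \affine$, and then its strictness by exhibiting a witness language that is affine but not stochastic.

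For the inclusion, I would argue that every PFA is literally a special AfA computing the same acceptance values. The key observation is that a stochastic matrix is in particular an affine transformation, since each of its columns is a probabilistic state and hence sums to $1$, which is exactly the defining condition of an affine operator. Thus, given a PFA $P = (E,\Sigma,\{A_\sigma \mid \sigma \in \tildesigma\},e_s,E_a)$ recognizing a stochastic language with cutpoint $\lambda$, the identical tuple read as an AfA $M$ traces the same sequence of states. The only point to check is that the acceptance criteria coincide: for a PFA the final vector $v_f$ is a stochastic vector, so all its entries are non-negative and $|v_f| = \sum_i |v_f[i]| = \sum_i v_f[i] = 1$. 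Consequently
\[
  f_M(w) = \sum_{e_k \in E_a} \frac{|v_f[k]|}{|v_f|} = \sum_{e_k \in E_a} v_f[k] = f_P(w),
\]
so $M$ with the same cutpoint $\lambda$ recognizes the same language, giving $\stoc \subseteq \affine$.

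For strictness, I would simply invoke the preceding results. By Theorem~\ref{thm:lapins}, the language $\lapins'$ is recognized by an AfA with cutpoint $\frac{1}{2}$, hence $\lapins' \in \affine$. On the other hand, as recalled before Theorem~\ref{thm:lapins} (following Lapin\v{s}), the language $\lapins'$ is nonstochastic, i.e. $\lapins' \notin \stoc$. Combining this with the inclusion above yields $\stoc \subsetneq \affine$.

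The main obstacle is not located in the corollary itself but has already been resolved in Theorem~\ref{thm:lapins}: the genuinely hard step is designing an AfA that performs the two independent comparisons $|w|_a^2 > |w|_b$ and $|w|_b^2 > |w|_c$ simultaneously (via tensoring two encodings and exploiting that $\delta = |y^2-z| - |1-y^2+z| \in \{+1,-1\}$), something no PFA can achieve with a single cutpoint. Given that theorem, the corollary amounts to bookkeeping; the only subtlety worth care is confirming that the affine acceptance value collapses to the probabilistic one on stochastic matrices, which the displayed computation above verifies.
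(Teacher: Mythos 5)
Your proposal is correct and follows essentially the same route as the paper: the paper derives this corollary directly from Theorem~\ref{thm:lapins} (the AfA for $\lapins'$, a nonstochastic language), with the inclusion $\stoc \subseteq \affine$ left implicit because AfAs are generalizations of PFAs. Your explicit verification that a stochastic final vector has $l_1$-norm $1$, so the affine acceptance value coincides with the probabilistic one, is exactly the bookkeeping the paper omits.
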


\subsection{Nondeterministic languages}

Now, we show that NAfAs are equivalent to NQFAs.

\begin{lemma}
  $ \excstoc \subseteq \naffine $.
\end{lemma}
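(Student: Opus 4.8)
The plan is to realize every exclusive stochastic language as a cutpoint-$0$ affine language by simulating the witnessing PFA and then converting its gap from the cutpoint into a single coordinate of the final affine state. Fix a PFA $P = (E,\Sigma,\{A_\sigma\},e_s,E_a)$ and a cutpoint $\lambda\in[0,1]$ with $L=\{w\mid f_P(w)\neq\lambda\}$. Since $\naffine$ is recognition with cutpoint $0$, I must construct an AfA $M$ with $f_M(w)>0 \iff f_P(w)\neq\lambda$; equivalently, because $|v_f|\geq 1$ always, I need the total $\ell_1$-weight on the accepting coordinates of $M$'s final vector to vanish \emph{exactly} when $f_P(w)=\lambda$.

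First I would note that every stochastic matrix is an affine operator, since its columns sum to $1$. Hence $M$ can run $P$ verbatim: using $P$'s operators on $\cent$ and on every input symbol, $M$ reaches, just before the right end-marker, precisely the stochastic vector $v_f$ that $P$ produces, and $f_P(w)=\sum_{e_k\in E_a}v_f[k]$ is the total mass on the accepting coordinates.

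The crux is the operator applied on the right end-marker, which must turn $f_P(w)-\lambda$ into a single coordinate. The key observation is that the incoming vector sums to $1$, so $\lambda=\lambda\sum_k v_f[k]$ and therefore
\[
  f_P(w)-\lambda = \sum_{e_k\in E_a}(1-\lambda)\,v_f[k] + \sum_{e_k\notin E_a}(-\lambda)\,v_f[k].
\]
This is a column-sum-preserving readout: I define an affine operator $B$ whose designated (accepting) row carries entry $1-\lambda$ on each accepting column and $-\lambda$ on each rejecting column, and route the complementary mass $1-f_P(w)+\lambda$ to one rejecting garbage coordinate by setting that row, entrywise, to $1$ minus the accepting row. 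Every column of $B$ then sums to $1$, so $B$ is affine, and since a product of affine operators is affine, I fold $B$ into $M$'s $\dollar$-operator (possibly adding two fresh states to serve as the accepting target and the garbage coordinate). The resulting final vector is $u_f=(f_P(w)-\lambda,\ 1-f_P(w)+\lambda,\ 0,\ldots,0)$ with the first state the unique accepting state.

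Then $f_M(w)=\dfrac{|f_P(w)-\lambda|}{|f_P(w)-\lambda|+|1-f_P(w)+\lambda|}$, whose denominator is at least $1$, so $f_M(w)=0$ iff $f_P(w)=\lambda$. Consequently $\{w\mid f_M(w)>0\}=\{w\mid f_P(w)\neq\lambda\}=L$, giving $L\in\naffine$. I expect the only genuinely non-routine step to be the subtraction of the constant $\lambda$: the insight that a fixed cutpoint can be absorbed into an affine operator by smearing weight $-\lambda$ across all coordinates—legitimate precisely because the incoming state has barycentric sum $1$—so that $f_P(w)-\lambda$ materializes as a single coordinate that is zero exactly on the cutpoint. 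The verbatim simulation and the affineness of the composed end-marker operator are straightforward by comparison. Combined with $\nqal=\excstoc$ recalled earlier, this inclusion is the ingredient that will eventually yield the equivalence of NAfAs and NQFAs.
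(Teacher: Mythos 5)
Your proof is correct and takes essentially the same route as the paper: run the PFA verbatim as an AfA (stochastic matrices are affine) and append an affine operator on the right end-marker that turns the gap $f_P(w)-\lambda$ into the value of a single designated accepting coordinate, which vanishes exactly at the cutpoint. The only difference is cosmetic --- the paper first normalizes the cutpoint to $\frac{1}{2}$ and factors the readout into a stochastic collection step followed by a small affine matrix, whereas you absorb an arbitrary $\lambda\in[0,1]$ into one affine readout row.
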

\begin{proof}
  Let $L$ be a language in $ \excstoc $. Then, there exists an $n$-state PFA $ P $ such that
  \[
    L = \{ w \in \Sigma^* \mid f_P(w) \neq \frac{1}{2} \}, 
  \]
  where $n>0 $. Let $ A_\dollar $ be the transition matrix for the right end-marker and $ v_f(w) $ be the final probabilistic state for the input $w \in \Sigma^*$. We can trivially design  a probabilistic transition matrix $ A'_\dollar $ such that the first and second entries of the probabilistic state
  \[
    v'_f(w) = A'_\dollar v_f(w)
  \]
  are $ 1 - f_P(w) $ and $ f_P(w) $, respectively, and the others are zeros. Let $ A''_\dollar $ be the following affine operator:
  \[
    \mymatrix{rr|c}{1 & -1 & \mbox{\textbf{0}} \\ 0 & 2 & \mbox{\textbf{0}} \\ \hline \mbox{\textbf{0}} & \mbox{\textbf{0}} & \mbox{\textbf{I}}}.
  \]
  Then, the first and second entries of $ v''_f = A''_\dollar v'_f $ are $ 1 - 2 f_P(w) \mbox{ and } 2f_P(w) $, respectively, and the others are zeros. So, based on $ P $, we can design an AfA $ M $ by making at most two modifications: (i) the single accepting state of $M$ is the first one and (ii) the affine operator for the right end-marker is $ A''_\dollar A'_\dollar A_\dollar  $. Then, if $ f_P(w) = \frac{1}{2} $ if and only if $ f_M(w) = 0 $. That is, $ L \in \naffine $.
  \qed
\end{proof}

\begin{lemma}
  $ \naffine \subseteq \nqal $.
\end{lemma}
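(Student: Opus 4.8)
The plan is to simulate a given NAfA directly by an NQFA, exploiting the fact that at cutpoint $0$ only the support (zero versus nonzero) of the final accepting amplitudes matters, so any global positive rescaling is harmless. Let $M=(E,\Sigma,\{A_\sigma\},e_s,E_a)$ be an NAfA and let $v_f$ be its final affine state on input $w$. Since $|v_f|\geq 1$, we have $f_M(w)>0$ if and only if $v_f[k]\neq 0$ for some accepting $e_k$, equivalently $\sum_{e_k\in E_a}(v_f[k])^2>0$. Thus it suffices to build an NQFA whose final accepting probability is a strictly positive multiple of $\sum_{e_k\in E_a}(v_f[k])^2$.

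The main obstacle is that the affine operators $A_\sigma$ are arbitrary (non-unitary, possibly norm-expanding) matrices, whereas a QFA evolves by trace-preserving superoperators. I would resolve this by first rescaling: fix $c=\max_{\sigma\in\tildesigma}\lVert A_\sigma\rVert>0$ and put $B_\sigma=A_\sigma/c$, so that $\lVert B_\sigma\rVert\leq 1$ and hence $I-B_\sigma^\dagger B_\sigma\succeq 0$. I then work on a Hilbert space split into a \emph{track} block (a copy of $\Real^{|E|}$ carrying the simulated affine computation) and a \emph{sink} block of the same dimension, all of whose states are declared non-accepting. For each $\sigma$, I define the superoperator $\mathcal{E}_\sigma$ with operation elements $E_{\sigma,1}$ acting as $B_\sigma$ on the track and $0$ on the sink, $E_{\sigma,2}$ sending the track into the sink via $\sqrt{I-B_\sigma^\dagger B_\sigma}$, and $E_{\sigma,3}$ acting as the identity on the sink and $0$ on the track. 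A one-line computation gives $E_{\sigma,1}^\dagger E_{\sigma,1}+E_{\sigma,2}^\dagger E_{\sigma,2}+E_{\sigma,3}^\dagger E_{\sigma,3}=I$, so each $\mathcal{E}_\sigma$ is a legitimate superoperator.

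The crucial point to verify is that the track block never receives amplitude back from the sink: only $E_{\sigma,1}$ maps the track to itself (the other elements either annihilate the track or feed the sink, and $E_{\sigma,1}$ annihilates the sink), so the track block of the mixed state stays \emph{pure} and evolves exactly as $\ket{\phi_t}=B_{\tilde w_t}\cdots B_{\tilde w_1}\ket{e_s}$. Taking $e_s$ as the quantum start state and the accepting states of $M$ (inside the track) as the accepting states of the NQFA, the final accepting probability equals $\sum_{e_k\in E_a}|\phi_f[k]|^2=c^{-2|\tilde w|}\sum_{e_k\in E_a}(v_f[k])^2$, which is a strictly positive multiple of the quantity identified in the first paragraph. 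Hence the NQFA accepts $w$ with nonzero probability exactly when $f_M(w)>0$, giving $\naffine\subseteq\nqal$. As an alternative route one could instead prove $\naffine\subseteq\excstoc$ via the same observation (using a tensor-square of $M$ to turn $\sum_{e_k\in E_a}(v_f[k])^2$ into a linear functional of an affine state) combined with the known equality $\nqal=\excstoc$, but the direct simulation above is self-contained within the superoperator formalism already introduced.
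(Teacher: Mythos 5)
Your proof is correct and follows essentially the same route as the paper's: one designated operation element carries a positively rescaled copy of each affine matrix acting on a distinguished ``track'' subspace, all remaining operation elements dump amplitude into a non-accepting sink subspace from which it never returns, and acceptance with cutpoint $0$ is insensitive to the resulting positive scaling factor $c^{-2|\tilde{w}|}$. The only substantive difference is that you make the Kraus completion explicit via $\sqrt{I-B_\sigma^\dagger B_\sigma}$ (with a single global normalization constant $c$), whereas the paper uses a per-symbol factor $l_\sigma$ and leaves those completing blocks as unspecified entries filled in by cited constructions.
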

\begin{proof}
  Let $ L \in \naffine $. Then, there exists an AfA $ M = (E,\Sigma,\{ A_\sigma \mid \sigma \in \TSigma,e_s,E_a \}) $ such that $ w \in L $ if and only if $ f_M(w) > 0 $ for any input $ w \in \Sigma^* $. Now, we design a nondeterministic QFA $ M' = \{ E \cup F ,\Sigma,\{ \mathcal{E}_\sigma \mid \sigma \in \Sigma \}, e_s, E_a \} $ for language $L$, where $ F $ is a set of finite states and $ E \cap F = \emptyset $. 

  We provide a simulation of $ M $ by $M'$. The idea is to trace the computation of $ M $ through a single pure state. Let $ w \in \Sigma^* $ be the input string. The initial affine state is $ v_0 = e_s $ and the initial quantum state is $ \ket{v_0} = \ket{e_s} $. Assume that each superoperator has $ k>0 $ operation elements.

  A superoperator can map a pure state to more than one pure state. Therefore, the computation of $ M' $ can be also traced/shown as a tree, say $ T_w $. We build the tree level by level. The root is the initial state. For the first level,  we apply $ \mathcal{E}_{\cent} $ to the initial state and obtain $ k $ vectors:
  \[
    \ket{\widetilde{v_{(j)}}} = {E}_{\cent,j} \ket{v_0}, ~~~ 1 \leq j \leq k,
  \] 
  some of which are unnormalized pure states and maybe the others are zero vectors. We connect all these vectors to the root. For the second level, we apply $ \mathcal{E}_{\tildew_2} $ to each vectors on the first level. Although it is clear that zero vectors can always be mapped to zero vectors, we keep them for simplicity. From the node corresponding $ \ket{\widetilde{v_{(j)}}} $, we obtain the following children:
  \[
    \ket{\widetilde{v_{(j,j')}}} = E_{\tildew_2,j'} \ket{\widetilde{v_{(j)}}}, ~~~ 1 \leq j' \leq k.
  \]
  We continue in this way (by increasing the indices of vectors by one in each level) and at the end, we obtain $ k^{|\tildew|} $ vectors at the leafs, some of which are unnormalized pure states. The indices of the vectors at the leafs are from $ (1,\ldots,1) $ to $ (k,\ldots,k) $. Remark that $ \ket{\widetilde{v_{(1,\ldots,1)}}} $ is calculated as
  \[
    \ket{\widetilde{v_{(1,\ldots,1)}}} = E_{\tildew_{|\tilde{w}|},1}	E_{\tildew_{|\tilde{w}|-1},1} \cdots E_{\tildew_{1},1} \ket{v_0},
  \]
  where all the operation elements are the ones having index of 1. Remark that if $ \alpha $ is a value of an accepting state in one of these pure states, then its contribution to the total accepting probability will be $ |\alpha|^2 $. 

  This tree facilities to describe our simulation. Each superoperator $ \mathcal{E}_{\sigma} = \{ E_{\sigma,1}, \ldots, E_{\sigma,k} \} $ is defined based on $ A_{\sigma} $. Among the others, $ E_{\sigma,1} $ is the special one that keeps the transitions of $ A_{\sigma} $ and all the others exist for making $ \mathcal{E}_{\sigma} $ a valid operator. The details of $ E_{\sigma,1} $ and the other operation elements of $ \mathcal{E}_{\sigma} $ are as follows:
  \[
    E_{\sigma,1} = \frac{1}{l_\sigma} \mymatrix{c|c}{A_\sigma & 0 \\ \hline 0 & I} 
    \mbox{ and }
    E_{\sigma,j} = \frac{1}{l_\sigma} \mymatrix{c|c}{0 & 0 \\ \hline * & *},~~~(2 \leq j \leq k)
  \]
  where $ l_\sigma \geq 1 $ is a normalization factor and the parts denoted by ``$*$'' can be arbitrary filled to make $ \mathcal{E}_{\sigma} $ a valid operator, which can be also formulated as follows:
  the columns of the following matrix must form an orthonormal set \cite{Yak12B}.
  \[
    \mymatrix{c}{ 
      \frac{1}{l_\sigma} \mymatrix{c|c}{A_\sigma & 0 \\ \hline 0 & I} 
      \\
      \frac{1}{l_\sigma} \mymatrix{c|c}{ \mspace{7mu} 0 ~ & 0 \\ \hline * & *}
      \\
      ~~~~~~ \vdots
      \\
      \frac{1}{l_\sigma} \mymatrix{c|c}{ \mspace{7mu} 0 ~ & 0 \\ \hline * & *}
    }
  \]
  Note that there have already been some methods to fill the parts denoted by ``$*$'' in a straightforward way \cite{YS10A,YS11A}.  

  The Hilbert space of $ M' $ can be decomposed into two orthogonal subspaces: $ \mathcal{H}_e = span\{ \ket{e} \mid e \in E \} $ and $ \mathcal{H}_f = span\{ \ket{f} \mid f \in F \} $. So, any pure state $ \ket{v} $ can be decomposed as $ \ket{v} = \ket{v_e} \oplus \ket{v_f} $, where $ \ket{v_e} \in \mathcal{H}_e $ and $ \ket{v_f} \in \mathcal{H}_f $. It is clear that any $ E_{\sigma,1} $ ($ \sigma \in \TSigma $) keeps the vector inside of the subspaces: $ E_{\sigma,1} : \mathcal{H}_e \rightarrow \mathcal{H}_e $ and $ E_{\sigma,1} : \mathcal{H}_f \rightarrow \mathcal{H}_f $. Then, $ E_{\sigma,1} $ maps $ \ket{v} = \ket{v_e} \oplus \ket{v_f} $ to $ \frac{1}{l_\sigma} A_\sigma \ket{v_e} \oplus \frac{1}{l_\sigma} \ket{v_f}  $. Therefore, when $ E_{\sigma,1} $ is applied, the part of computation in $ \mathcal{H}_f $ never affects the part in $ \mathcal{H}_e $.

  All the other operational elements map any vector inside $ \mathcal{H}_f $ and so they never affect the part in $ \mathcal{H}_e $. Remark that any pure state lies in $ \mathcal{H}_f $ never produce an accepting probability since the set of accepting states are a subset of $ E $.

  Now, we have enough details to show why our simulation works. When considering all leaves of $ T_w $, only $ \ket{\widetilde{v_{(1,\ldots,1)}}} $ lies in $ \mathcal{H}_e $ and all the others lie in $ \mathcal{H}_{f} $. Then, the accepting probability can be produced only from $ \ket{\widetilde{v_{(1,\ldots,1)}}} $, the value of which can be straightforwardly calculated as
  \[
    \ket{\widetilde{v_{(1,\ldots,1)}}} =  \frac{1}{l_w} (v_f,*,\ldots,*) , ~~~ l_w = \prod_{j=1}^{|\tildew|} l_{\tildew_j},
  \]
  where ``$*$'' are some values of the states in $ F $. It is clear that $ f_M(w) = 0 $ if and only if $ f_{M'}(w) = 0 $.

  Remark that each superoperator can have a different number of operation elements and this does not change our simulation. Moreover, the size of $ F $ can be arbitrary. %any positive integer. 
  If it is small, then we need to use more operation elements and if it is big enough, then we can use less operation elements. 
  \qed
\end{proof}
\begin{theorem}
  $ \naffine = \nqal $.
\end{theorem}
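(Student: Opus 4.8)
The plan is to obtain the equality by sandwiching the three classes together, using the two lemmas just established along with the known characterization of $\nqal$. Recall from the discussion in Section~\ref{classical-automata} that it is known that $\nqal = \excstoc$~\cite{YS10A}. Combining this equality with the first lemma, $\excstoc \subseteq \naffine$, immediately gives $\nqal \subseteq \naffine$. The second lemma, $\naffine \subseteq \nqal$, supplies the reverse inclusion. Chaining everything,
\[
  \nqal = \excstoc \subseteq \naffine \subseteq \nqal,
\]
forces each inclusion to be an equality, so in particular $\naffine = \nqal$ (and, as a byproduct, $\naffine = \excstoc$ as well).

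The only thing that needs checking at this step is that the three facts compose with no hidden hypotheses: the identity $\nqal = \excstoc$ is quoted verbatim as a cited result, and both lemmas are stated for the full classes without any side conditions, so the chain genuinely closes. All of the real effort has already been spent in the two preceding lemmas — in particular in the simulation of an AfA by a nondeterministic QFA, where one embeds each affine operator $A_\sigma$ into a valid superoperator $\mathcal{E}_\sigma$ via the operation element $E_{\sigma,1}$ and argues that among the $k^{|\tildew|}$ leaves of the computation tree only the all-index-$1$ branch $\ket{\widetilde{v_{(1,\ldots,1)}}}$ lies in $\mathcal{H}_e$ and hence contributes to the accepting probability. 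Granting those lemmas, the theorem itself is a one-line consequence, and I anticipate no obstacle at this final step.
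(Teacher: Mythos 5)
Your proposal is correct and matches the paper's own proof: both derive the equality by chaining the cited fact $\excstoc = \nqal$ with the two lemmas to get $\nqal \subseteq \excstoc \subseteq \naffine \subseteq \nqal$, forcing all inclusions to be equalities. No gaps here; the substance indeed lives in the two preceding lemmas, as you note.
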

\begin{proof}
  The equality follows from the fact that  $ \excstoc = \nqal $ \cite{YS10A} and the previous two lemmas:
  $
  \nqal \subseteq \excstoc \subseteq \naffine \subseteq \nqal
  $
  \qed
\end{proof}
\subsection{Exclusive stochastic languages}

For PFAs, exclusive languages ($ \excstoc $) are larger than nondeterministic (cutpoint 0) languages ($ \reg $). On the other hand, for QFAs, exclusive languages and nondeterministic languages are identical ($ \nqal = \excstoc $). For AfAs, we show that nondeterministic languages ($ \naffine $) are identical $ \excstoc = \nqal $. The interesting question here is whether AfAs show a similar behaviour to PFAs or QFAs when comparing exclusive and nondeterministic languages. Now, we show that, for AfAs, exclusive languages ($ \excaffine $) are larger than nondeterministic languages ($ \naffine $) similar to PFAs. For this purpose, we use the complement classes $ \coexcstoc $ and $ \coexcaffine $.

\newcommand{\abseq}{\mathtt{ABS \mbox{-} EQ}}
The language $ \abseq $ is defined on $ \{a,b\} $ such that $ w \in \abseq $ if and only if 
\begin{equation}
  \label{eq:abseq}
  | m-n  | + | m-4n | = | m-2n | + | m-3n |,
\end{equation}
where $ |w|_a = m  $ and $ |w|_b=n $.

It is clear that $ \abseq $ contains every string if we do not use absolute values in Equation \ref{eq:abseq}. Therefore, the interesting part of $\abseq$ is the absolute values, which leads us to understand the power of the weighting operator. Remark that if $ m \geq 4n $ and $ m \leq n $, Equation \ref{eq:abseq} is trivially satisfied, and, if $ m \in (n,4n) $, then Equation \ref{eq:abseq} is never satisfied.

First, we show that $ \abseq $ is not in $ \coexcstoc $ by using the following fact \cite{Die71}.

\begin{fact}
  Let $ L $ be a language in $ \coexcstoc $. Hence, there exists an $ n $-state PFA $ P $ such that $ w \in L $ if and only if $ f_P(w) = \frac{1}{2} $. Then, for any $ x,y,z \in \Sigma^* $, 
  \[
    \mbox{if } xz, xyz, xy^2z,\ldots,xy^{n-1}z \in L, \mbox{ we also have } x y^* z \in L. 
  \]
\end{fact}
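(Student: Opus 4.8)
The plan is to reduce membership of $xy^kz$ in $L$ to the behaviour of a single scalar sequence and to exploit that the transition matrix for the pumped factor $y$ is stochastic. Writing the computation of $P$ on $\cent x y^k z \dollar$ as a product of transition matrices, I would let $u$ be the probabilistic state reached after reading the left end-marker and $x$, let $M = A_{y_{|y|}}\cdots A_{y_1}$ be the product of the transition matrices of the symbols of $y$, and let $\rho$ be the row functional that first applies the transition matrix for $z\dollar$ and then sums the entries indexed by accepting states. Then
\[
  g(k) := f_P(xy^k z) = \rho\, M^k u ,
\]
so the hypothesis reads $g(0) = g(1) = \cdots = g(n-1) = \tfrac12$, and the goal is $g(k) = \tfrac12$ for every $k \geq 0$, which is exactly $xy^* z \subseteq L$.

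The key observation I would make is that $M$ is column-stochastic, being a product of stochastic matrices, so the all-ones row vector is a left eigenvector of eigenvalue $1$. Hence $\chi(1) = \det(I - M) = 0$, where $\chi(t) = t^n - c_{n-1}t^{n-1} - \cdots - c_0$ is the characteristic polynomial of $M$. By Cayley--Hamilton, $M^n = \sum_{j=0}^{n-1} c_j M^j$, and multiplying on the left by $\rho\, M^k$ and on the right by $u$ yields the homogeneous order-$n$ recurrence $g(k+n) = \sum_{j=0}^{n-1} c_j\, g(k+j)$. Crucially, the vanishing $\chi(1)=0$ is precisely the identity $\sum_{j=0}^{n-1} c_j = 1$.

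With that identity in hand the remainder is a short induction. Because $\sum_j c_j = 1$, the constant sequence $k \mapsto \tfrac12$ is itself a solution of the recurrence, so by linearity $h(k) := g(k) - \tfrac12$ satisfies the same homogeneous recurrence. The hypotheses give $h(0) = \cdots = h(n-1) = 0$, and the recurrence then forces $h(k) = 0$ for all $k$ by induction on $k$. Therefore $g(k) = \tfrac12$, i.e. $xy^k z \in L$, for every $k \geq 0$.

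The step I expect to be the crux is recognizing that the stochasticity of $M$ is exactly what makes the recurrence coefficients sum to $1$; this is what lets the $n$ vanishing initial conditions of $h = g - \tfrac12$ propagate, rather than merely those of $g$ itself (a generic linear recurrence with $n$ zero initial terms of $g$ would say nothing about $g-\tfrac12$). Everything else --- the linear-algebraic encoding of $f_P$, the appeal to Cayley--Hamilton, and the final induction --- is routine. The one minor point to treat carefully is the placement of the end-markers: the matrix $M$ governing $y$ must involve only genuine input symbols so that it is honestly stochastic, with $\cent$ absorbed into $u$ and $z\dollar$ absorbed into $\rho$.
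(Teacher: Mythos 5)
Your argument is correct, but there is nothing in the paper to compare it against: the statement is quoted as a known \textbf{Fact} with a citation to Dieu \cite{Die71}, and the paper uses it as a black box (to show that the language of Equation~\ref{eq:abseq} lies outside $\coexcstoc$) without ever proving it. What you have reconstructed is essentially the classical argument behind that citation, and every step checks out: $M$ is column-stochastic as a product of column-stochastic matrices, so $\mathbf{1}^{T}M=\mathbf{1}^{T}$ and hence $\chi(1)=\det(I-M)=0$, which with your normalization $\chi(t)=t^{n}-c_{n-1}t^{n-1}-\cdots-c_{0}$ is exactly $\sum_{j=0}^{n-1}c_{j}=1$; Cayley--Hamilton turns this into the order-$n$ recurrence $g(k+n)=\sum_{j}c_{j}\,g(k+j)$ for $g(k)=\rho\,M^{k}u$; and because the coefficients sum to $1$, the constant sequence $\tfrac12$ solves the recurrence, so $h=g-\tfrac12$ satisfies it as well and vanishes identically once its first $n$ values vanish. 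You correctly identified the crux: without $\sum_{j}c_{j}=1$ the $n$ zero initial values of $h$ would tell you nothing. One small remark: your closing caution about keeping the end-markers out of $M$ is not needed for stochasticity, since in a PFA the matrices $A_{\cent}$ and $A_{\dollar}$ are stochastic too; the placement only matters so that $M$ is precisely the pumped factor, making the identity $g(k)=\rho\,M^{k}u$ hold. Your proof also silently covers the degenerate case $y=\varepsilon$ (then $M=I$), so nothing is missing.
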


\begin{theorem}
  $ \abseq \notin \coexcstoc $.
\end{theorem}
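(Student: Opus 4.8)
The plan is to argue by contradiction using the Fact, so the whole proof reduces to producing one family of pumping words. First I would record the purely combinatorial description of membership that the paragraph preceding the Fact already justifies from Equation~\ref{eq:abseq}: writing $p = |w|_a$ and $q = |w|_b$, a string $w$ lies in $\abseq$ exactly when $p \le q$ or $p \ge 4q$, and $w \notin \abseq$ precisely in the ``middle band'' $q < p < 4q$. Nothing else about Equation~\ref{eq:abseq} is needed; this trichotomy is the only property of $\abseq$ the argument uses.

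Next I would suppose toward a contradiction that $\abseq \in \coexcstoc$. By the Fact there is some $n$-state PFA $P$ with $w \in \abseq \iff f_P(w) = \tfrac12$, and in particular the pumping property holds: for all $x,y,z \in \{a,b\}^*$, if $xz, xyz, \ldots, xy^{n-1}z \in \abseq$, then $xy^k z \in \abseq$ for every $k \ge 0$. The heart of the proof is to exhibit $x,y,z$ (depending on this $n$) satisfying the hypothesis while violating the conclusion. I would take a long block of $a$'s and pump in a single $b$:
\[
  x = a^{4n}, \qquad y = b, \qquad z = \varepsilon,
\]
so that $xy^{i}z = a^{4n}b^{i}$ has $p = 4n$ and $q = i$.

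It then remains to check the two halves, both of which are immediate from the trichotomy. For the hypothesis: when $0 \le i \le n-1$ we have $4n \ge 4i$, i.e. $p \ge 4q$, so each of $xz, xyz, \ldots, xy^{n-1}z$ lies in $\abseq$. For the failure of the conclusion: the power $i = 2n$ gives $a^{4n}b^{2n}$ with $p = 4n$ and $q = 2n$, which sits strictly inside the forbidden band since $2n < 4n < 8n$, i.e. $q < p < 4q$; hence $xy^{2n}z \notin \abseq$. This contradicts the Fact, giving $\abseq \notin \coexcstoc$. The argument is essentially calculation-free; the only genuine design choice — and hence the main obstacle — is tuning the length of the $a$-block so that the first $n$ pumped strings remain in the ``$p \ge 4q$'' region while a strictly later power is forced into the middle band, which is exactly what the choice $x = a^{4n}$ accomplishes.
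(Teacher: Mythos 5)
Your proof is correct and takes essentially the same route as the paper's: both invoke the cited Fact for $\coexcstoc$ and derive a contradiction by pumping $b$'s after a long block of $a$'s, relying on the trichotomy (stated in the paper right before the Fact) that $w \in \abseq$ iff $|w|_a \le |w|_b$ or $|w|_a \ge 4|w|_b$. Your witnesses $x = a^{4n}$, $y = b$, $z = \varepsilon$ are marginally cleaner than the paper's $x = a^{8n}b$, $y = b$, $z = b^{n}$ (and spare the explicit arithmetic check), but the argument is the same.
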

\begin{proof}
  Suppose that there exists $ n $-state PFA $ P $ as described in the above fact for the language $ \abseq $ for $ n>1 $. We pick $ x = a^{8n}b $, $ z = b^{n} $, and $ y = b $. Then, we can have the following list (remember that as long as $ |w|_a - 4 |w|_b \geq 0 $, Equation \ref{eq:abseq} is trivially satisfied):
  \[
    \begin{array}{lllll}
      w=xy & = a^{8n}b^{n+1} & \mbox{is in } \abseq \mbox{ since } & |w|_a - 4 |w|_b = 4n-4 & \geq 0 \\
      w=xyz & = a^{8n}b^{n+2} & \mbox{is in } \abseq \mbox{ since } & |w|_a - 4 |w|_b = 4n - 8 & \geq 0 \\
      w=xy^jz & = a^{8n}b^{n+j+1} & \mbox{is in } \abseq \mbox{ since } & |w|_a - 4 |w|_b = 4n-4j-4 & \geq 0  \\
      w=xy^{n-1}z & = a^{8n}b^{2n} & \mbox{is in } \abseq \mbox{ since } & |w|_a - 4 |w|_b = 0 & \geq 0 \\
    \end{array}
  \]
  Due to the above fact, $ xy^nz = a^{8n}b^{2n+1} $ must be in $ \abseq $ but Equation 1 cannot be satisfied for $ xy^nz $:
  \begin{eqnarray*}
    | 8n - (2n+1) | + | 8n - 4(2n+1) | & = & | 8n - 2(2n+1) | + | 8n - 3(2n+1) | \\
    (6n - 1) + 4 & = & (4n-2) + (2n-3) \\
    6n + 3 & = & 6n - 5 
  \end{eqnarray*}	
  Therefore, $ xy^nz  $ is not in $ \abseq $ and so $ \abseq $ is not a member of $ \coexcstoc $.	
  \qed\end{proof}

Now, we present our AfA algorithm for $ \abseq $ which only calculates the values inside the absolute values in Equation \ref{eq:abseq} and the desired decision is given by the weighting operator. 

\begin{theorem}
  $ \abseq $ is in $ \coexcaffine $.
\end{theorem}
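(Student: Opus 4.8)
The plan is to exhibit an AfA $M$ and cutpoint $\tfrac12$ such that $\abseq = \{\, w \mid f_M(w) = \tfrac12 \,\}$; by the definition of the complement class this places $\abseq$ in $\coexcaffine$, since a language $L'$ lies in $\coexcaffine$ exactly when $L' = \{\, w \mid f_{M}(w) = \lambda \,\}$ for some AfA $M$ and cutpoint $\lambda$. The guiding observation is that the weighting operator already takes absolute values of the entries, so it suffices to build a final affine state whose \emph{accepting} entries carry the quantities $m-n$ and $m-4n$ and whose \emph{rejecting} entries carry $m-2n$ and $m-3n$; then equality of the two sides of Equation~\ref{eq:abseq} corresponds precisely to the accepting and rejecting $l_1$-masses being equal, that is, to $f_M(w) = \tfrac12$.

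First I would reuse the encoding machinery from the proof of Theorem~\ref{thm:lapins}. The counting matrix used there is not itself affine (its first column sums to $2$), but, as observed in that proof, it embeds into a genuine affine operator once auxiliary states are added to absorb the excess column mass. Using the tensor of two such counting blocks, I can build a machine whose state after reading $\tilde w$ stores $m=|w|_a$ and $n=|w|_b$ together with a constant-$1$ state (and some auxiliary garbage entries).

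On the right end-marker I would then apply a short composition of affine operators to rearrange these values into the final affine state
\[
  v_f = \bigl(\, m-n,\ m-4n,\ m-2n,\ m-3n,\ \tfrac{1-T}{2},\ \tfrac{1-T}{2},\ 0,\dots,0 \,\bigr)^{\!\top},
\]
where $T=(m-n)+(m-4n)+(m-2n)+(m-3n)$ is the sum of the first four entries, so that $T+2\cdot\tfrac{1-T}{2}=1$ and $v_f$ is indeed affine. Marking states $1,2$ and the first padding state as accepting and states $3,4$ and the second padding state as rejecting gives $|A|=|m-n|+|m-4n|+\bigl|\tfrac{1-T}{2}\bigr|$ and $|R|=|m-2n|+|m-3n|+\bigl|\tfrac{1-T}{2}\bigr|$; the common padding term then cancels, and
\[
  f_M(w)=\frac{|A|}{|A|+|R|}=\tfrac12 \iff |m-n|+|m-4n|=|m-2n|+|m-3n|,
\]
which is exactly the membership condition for $\abseq$. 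Hence $\abseq=\{\, w \mid f_M(w)=\tfrac12 \,\}$ and $\abseq\in\coexcaffine$.

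I expect the main obstacle to be bookkeeping rather than anything conceptual: I must check that every matrix used to count $m$ and $n$ and to assemble $v_f$ on the end-marker is a valid affine transformation (every column summing to $1$), and that the auxiliary counting states are cleanly zeroed or folded into the padding so that they add no spurious mass to $|A|$ or $|R|$. The device that makes the argument work is the pair of equal padding entries $\tfrac{1-T}{2}$, split one into the accepting and one into the rejecting set: it simultaneously restores the affine normalization $\sum_i v_f[i]=1$ and cancels out of the comparison $|A|$ versus $|R|$, which is what turns the exact equality in Equation~\ref{eq:abseq} into the single cutpoint $\tfrac12$. Once this is set up, the remaining verifications are the same routine embeddings already carried out in Theorem~\ref{thm:lapins}.
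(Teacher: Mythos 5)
Your proposal is correct and follows essentially the same route as the paper's proof: track $m$ and $n$ with affine counting operators, and on the right end-marker assemble the final state $(m-n,\ m-4n,\ m-2n,\ m-3n,\ \tfrac{1-T}{2},\ \tfrac{1-T}{2})$ with one padding entry accepting and one rejecting, so that $f_M(w)=\tfrac12$ holds exactly when $|m-n|+|m-4n|=|m-2n|+|m-3n|$. The only cosmetic difference is that the paper builds the counters directly as $3\times 3$ affine matrices (keeping $1-m-n$ in the first entry) rather than embedding the non-affine counting matrices of Theorem~\ref{thm:lapins}, yielding an explicit $6$-state machine, but the key device --- the split pair of $\tfrac{1-T}{2}$ padding entries restoring affineness while cancelling from the comparison --- is identical.
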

\begin{proof}
  We design a $ 6 $-state AfA $ M $ for $ \abseq $. The initial state is $ (1~~0~~0~~0~~0~~0)^T $. The operator on $ \cent $ is identity. Let $ w \in \{a,b\}^* $ be the given input and $ |w|_a = m $ and $ | w |_b = n $. After reading $ w $, the values of $ m $ and $ n $ are stored in the second and third states:
  \[
    v_{|\cent w|} = \myvector{1-m-n \\ m \\ n \\ 0 \\ 0 \\ 0}
  \]
  The updates for three states for symbols $ a $ and $ b $ are given below. The value of the second (first) entry is increased (decreased) by 1 when reading an $ a $:
  \[
    \myvector{-m'-n' \\ m'+1 \\ n'} =  \mymatrix{rrr}{0  & ~-1 & ~-1 \\ 1 & 2 & 1 \\ 0 & 0 & 1 } \myvector{ 1-m'-n' \\ m' \\ n' }.
  \]
  The value of the third (first) entry is increased (decreased) by 1 when reading a $ b $:
  \[
    \myvector{-m'-n' \\ m' \\ n'+1} =  \mymatrix{rrr}{0  & ~-1 & ~-1 \\ 0 & 1 & 0 \\ 1 & 1 & 2 } \myvector{ 1-m'-n' \\ m' \\ n' }.
  \]
  On the right end-marker, we apply the following affine operator:
  \[
    v_f =
    \mymatrix{c}{ m-n \\ m-2n \\ m-3n \\ m-4n \\ \frac{1-T}{2} \\ \frac{1-T}{2} }		
    =
    \mymatrix{rrrrrr}{ 
      0 & ~1 & ~-1 & ~1 & ~0 & ~0 \\
      0 & 1 & -2 & 0 & 1 & 0 \\
      0 & 1 & -3 & 0 & 0 & 1 \\
      0 & 1 & -4 & 0 & 0 & 0 \\
      \frac{1}{2} & ~-\frac{3}{2} & \frac{11}{2} & 0 & 0 & 0 \\
      \frac{1}{2} & ~-\frac{3}{2} & \frac{11}{2} & 0 & 0 & 0
    }
    \myvector{1-m-n \\ m \\ n \\ 0 \\ 0 \\ 0},
  \]
  where $ T = 4m-10n $, the summation of the first four entries in $ v_f $. Let the first, fourth, and fifth states be the accepting ones. Then $ f_M (w) = \frac{1}{2}$ if and only if 
  \[
    | m-n | + | m-4n | + \left| \frac{1-T}{2} \right| = | m-2n | + | m-3n | + \left| \frac{1-T}{2} \right|
  \]
  that is
  \[
    | m-n | + | m-4n | = | m-2n | + | m-3n |.
  \]
  Therefore, $ \abseq $ is a member of $ \coexcaffine $.
  \qed\end{proof}

\begin{corollary}
  $ \coexcstoc  = \overline{\nqal} = \overline{\naffine} \subsetneq \coexcaffine $
  and
  $ \excstoc = \nqal = \naffine \subsetneq \excaffine $.
\end{corollary}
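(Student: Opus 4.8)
The plan is to assemble this corollary almost entirely from results already in hand, so the work is bookkeeping about complementation rather than new construction. First I would record the two equality chains. The text already cites $\excstoc = \nqal$ \cite{YS10A}, and the immediately preceding theorem establishes $\naffine = \nqal$; stringing these together gives $\excstoc = \nqal = \naffine$. Since $\coexcstoc$, $\overline{\nqal}$, and $\overline{\naffine}$ are by definition exactly the classes of complements of the languages in $\excstoc$, $\nqal$, and $\naffine$ respectively, applying complementation term-by-term to this equality chain immediately yields $\coexcstoc = \overline{\nqal} = \overline{\naffine}$. This settles all of the equalities.

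Next I would establish the containment $\naffine \subseteq \excaffine$, which is the crucial (and essentially trivial) observation underpinning the strict inclusions. For any AfA $M$ the accepting value $f_M(w)$ lies in $[0,1]$, so in particular $f_M(w) \geq 0$. A nondeterministic AfA recognizes $L = \{ w \mid f_M(w) > 0 \}$, but by non-negativity this is literally the same set as $\{ w \mid f_M(w) \neq 0 \}$, which is the exclusive affine language defined by $M$ with cutpoint $\lambda = 0$ (permitted, since $\excaffine$ allows $\lambda \in [0,1]$). Hence every language in $\naffine$ already lies in $\excaffine$, and taking complements gives $\overline{\naffine} \subseteq \coexcaffine$.

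Finally I would invoke $\abseq$ to obtain the strictness. The two preceding theorems give $\abseq \in \coexcaffine$ and $\abseq \notin \coexcstoc$; combining the latter with $\coexcstoc = \overline{\naffine}$ shows $\abseq \notin \overline{\naffine}$, so $\abseq$ witnesses $\overline{\naffine} \subsetneq \coexcaffine$. Dually, $\abseq \in \coexcaffine$ means $\overline{\abseq} \in \excaffine$, while $\abseq \notin \overline{\naffine}$ means $\overline{\abseq} \notin \naffine$, so $\overline{\abseq}$ witnesses $\naffine \subsetneq \excaffine$. The only point demanding care is the direction of complementation—matching the ``$>0$'' condition of nondeterministic recognition with the ``$\neq 0$'' condition of exclusive recognition at cutpoint $0$, and verifying that the witness language is being placed in the correctly complemented class—since all the genuine combinatorial content was already discharged in the theorems proving the membership and non-membership of $\abseq$.
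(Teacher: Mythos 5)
Your proposal is correct and follows essentially the same route as the paper, which states this corollary without an explicit proof precisely because it assembles, as you do, the equalities from the theorem $\naffine = \nqal$ together with $\excstoc = \nqal$ \cite{YS10A}, and the strictness from the two $\abseq$ theorems ($\abseq \in \coexcaffine$, $\abseq \notin \coexcstoc$). The one detail worth noting is your containment $\naffine \subseteq \excaffine$ via cutpoint $0$: this is licensed by the paper's definition of $\excaffine$ (which allows $\lambda \in [0,1]$), and if one instead insisted on cutpoints in $(0,1)$, the paper's changing-cutpoint theorems supply exactly the conversion needed, which is why they were proved in advance.
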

Then, similar to PFAs, if we change the cutpoint in $ (0,1) $, the class $ \excaffine $ does not change, but when setting to 0 or 1, we obtain $ \naffine $, a proper set of $ \excaffine $.

\section{Concluding remarks}
\label{sec:conclusion}

We introduce affine computation as a generalization of probabilistic computation by allowing to use negative ``probabilities''. After giving the basics of the new system, we define affine finite automaton and compare it with probabilistic and quantum finite automata. We show that our new automaton model is more powerful than the probabilistic and quantum ones in bounded- and unbounded-error language recognitions and equivalent to quantum one in nondeterministic language recognition mode. Moreover, we show that exclusive affine languages form a superset of exclusive quantum and stochastic languages. These are only the initial results. Recently, new results regarding computational power and succinctness of AfAs are obtained in \cite{VilY16A,BMY16A}. We believe that the further investigations on the affine computational models can provide new insights on using negative transition values.

\section{Acknowledgements} We thank Marcos Villagra for his very helpful comments. We also thank the anonymous reviewers for their very helpful comments.

\bibliographystyle{plain}
\bibliography{tcs}

\begin{thebibliography}{10}

\bibitem{Aar05}
Scott Aaronson.
\newblock Quantum computing, postselection, and probabilistic polynomial-time.
\newblock {\em Proceedings of the Royal Society A}, 461(2063):3473--3482, 2005.

\bibitem{AY15}
Andris Ambainis and Abuzer Yakary{\i}lmaz.
\newblock Automata and quantum computing.
\newblock Technical Report 1507.01988, arXiv, 2015.

\bibitem{BMY16A}
Aleksandrs Belovs, Juan~Andres Montoya, and Abuzer Yakary{\i}lmaz.
\newblock Can one quantum bit separate any pair of words with zero-error?
\newblock Technical Report 1602.07967, arXiv, 2016.

\bibitem{Die71}
Phan~Dinh Di\^eu.
\newblock On a class of stochasic languages.
\newblock {\em Mathematical Logic Quarterly}, 17(1):421--425, 1971.

\bibitem{Gre78}
S.~A. Greibach.
\newblock Remarks on blind and partially blind one-way multicounter machines.
\newblock {\em Theoretical Computer Science}, 7:311--324, 1978.

\bibitem{Lap74}
J\={a}nis Lapi\c{n}\v{s}.
\newblock On nonstochastic languages obtained as the union and intersection of
  stochastic languages.
\newblock {\em Avtom. Vychisl. Tekh.}, 4:6--13, 1974.
\newblock (Russian).

\bibitem{LQZLWM12}
Lvzhou Li, Daowen Qiu, Xiangfu Zou, Lvjun Li, Lihua Wu, and Paulo Mateus.
\newblock Characterizations of one-way general quantum finite automata.
\newblock {\em Theoretical Computer Science}, 419:73--91, 2012.

\bibitem{MC00}
Cristopher Moore and James~P. Crutchfield.
\newblock Quantum automata and quantum grammars.
\newblock {\em Theoretical Computer Science}, 237(1-2):275--306, 2000.

\bibitem{Paz71}
A.~Paz.
\newblock {\em Introduction to Probabilistic Automata}.
\newblock Academic Press, New York, 1971.

\bibitem{Rab63}
M.~O. Rabin.
\newblock Probabilistic automata.
\newblock {\em Information and Control}, 6:230--243, 1963.

\bibitem{RS59}
M.O. Rabin and D.~Scott.
\newblock Finite automata and their decision problems.
\newblock {\em IBM Journal of Research and Development}, 3:114--125, 1959.

\bibitem{SayY14}
A.~C.~Cem Say and Abuzer Yakary{\i}lmaz.
\newblock Quantum finite automata: A modern introduction.
\newblock In {\em Computing with New Resources}, volume 8808 of {\em LNCS},
  pages 208--222. Springer, 2014.

\bibitem{VilY16A}
Marcos Villagra and Abuzer Yakary{\i}lmaz.
\newblock Language recognition power and succintness of affine automata.
\newblock Technical Report 1602.05432, arXiv, 2016.

\bibitem{Wat09A}
John Watrous.
\newblock {\em Encyclopedia of Complexity and System Science}, chapter Quantum
  computational complexity.
\newblock Springer, 2009.
\newblock Also available at arXiv:0804.3401.

\bibitem{Yak12B}
Abuzer Yakary{\i}lmaz.
\newblock Superiority of one-way and realtime quantum machines.
\newblock {\em RAIRO - Theoretical Informatics and Applications},
  46(4):615--641, 2012.

\bibitem{YS09C}
Abuzer Yakary{\i}lmaz and A.~C.~Cem Say.
\newblock Languages recognized with unbounded error by quantum finite automata.
\newblock In {\em CSR'09}, volume 5675 of {\em LNCS}, pages 356--367, 2009.

\bibitem{YS10A}
Abuzer Yakary{\i}lmaz and A.~C.~Cem Say.
\newblock Languages recognized by nondeterministic quantum finite automata.
\newblock {\em Quantum Information and Computation}, 10(9\&10):747--770, 2010.

\bibitem{YS11A}
Abuzer Yakary{\i}lmaz and A.~C.~Cem Say.
\newblock Unbounded-error quantum computation with small space bounds.
\newblock {\em Information and Computation}, 279(6):873--892, 2011.

\end{thebibliography}

\end{document}